\newtheorem{theorem}{Theorem}[section]
\newtheorem{lemma}[theorem]{Lemma}
\begin{document}
\title{\normalsize\bf \vspace{-8ex}
Embedding and trajectories of temporal networks}
\author{Chanon Thongprayoon$^1$, Lorenzo Livi$^2$, Naoki Masuda$^{1,3,4,*}$}
\date{\vspace{-5ex}}
\maketitle
\vspace{-14pt}
\begin{center}
\small
$^1$Department of Mathematics, State University of New York at Buffalo, Buffalo, NY, USA.\\
$^2$Department of Computer Science, University of Manitoba, Winnipeg, Manitoba, Canada\\
$^3$Computational and Data-Enabled Science and Engineering Program, State University of New York at Buffalo, Buffalo, NY, USA\\
$^4$Faculty of Science and Engineering, Waseda University, Shinjuku, Tokyo, Japan\\
* Corresponding author: naokimas@buffalo.edu
\end{center}
\text{ {\color{red}This article has been published in IEEE Access, vol. 11, pp. 41426-41443, 2023, doi: 10.1109/ACCESS.2023.3268030.}}

\begin{abstract}
Temporal network data are increasingly available in various domains, and often represent highly complex systems with intricate structural and temporal evolutions.
Due to the difficulty of processing such complex data, it may be useful to coarse grain temporal network data into a numeric trajectory embedded in a low-dimensional space.
We refer to such a procedure as temporal network embedding, which is distinct from procedures that aim at embedding individual nodes.
Temporal network embedding is a challenging task because we often have access only to discrete time-stamped events between node pairs, and, in general, the events occur with irregular intervals, making the construction of the network at a given time a nontrivial question already.
We propose a method to generate trajectories of temporal networks embedded in a low-dimensional space given a sequence of time-stamped events as input. We realize this goal by combining the landmark multidimensional scaling, which is an out-of-sample extension of the well-known multidimensional scaling method, and the framework of tie-decay temporal networks. This combination enables us to obtain a continuous-time trajectory describing the evolution of temporal networks. We then study mathematical properties of the proposed temporal network embedding framework. Finally, we showcase the method with empirical data of social contacts to find temporal organization of contact events and loss of them over a single day and across different days. 
\end{abstract}

\section{Introduction}

Many complex systems can be modeled as networks that consist of nodes and edges connecting nodes \cite{newman2018networks, barabasi2016network}. 
Network analysis has enabled us to extract useful information from data that represent relationships between elements constituting a system and provided concepts and methods with which to understand and intervene into phenomena occurring on networks such as epidemic spreading, cascading failures and synchronization. In fact, the structure of many types of networks changes over time on a timescale that is comparable with the timescale of dynamical processes occurring on networks such as epidemic processes. Examples of such networks, called temporal networks \cite{HOLME201297, holme2015modern, masuda2020guide, bansal2010dynamic}, include social contact networks whose structure evolves owing to mobility of individuals, 
e-mail communication networks among users, and air transportation networks in which direct commercial flights connecting a pair of airports are time-stamped and also vary over months. Research on temporal networks has revealed that the temporality of the networks changes various structural and dynamical properties of networks relative to those of static networks. For example, communities in temporal networks show much richer features than static-network counterparts, such as birth, death, split, and merge of communities, and heterogeneous distributions of lifetimes of individual communities, which are concepts that are relevant only to temporal networks \cite{palla2007quantifying,fortunato2010community}. Such dynamic community structure of temporal networks has been useful for understanding evolution of scientific research fields~\cite{rosvall2010mapping}
and that of groups of senators over the history in the US \cite{mucha2010community}, to name a few.

Temporal networks are intimately related to time series data. However, it seems that temporal network data analyses that have been developed in the past two decades or so have not exploited a full potential of general time series analysis, which has a long history and provides an ample repertoire of methods. Important exceptions include change-point and anomaly detection in temporal networks \cite{chandola2009anomaly, akoglu2015graph,zambon2018concept} and extraction of the discrete state of the temporal network and its transitory dynamics \cite{masuda2019detecting, baker2014fast, ryali2016temporal, VIDAURRE2018646}, which are clearly hinted on the same analyses of general time series data. We propose that this gap is because most methods for analyzing time series assume numeric data, whereas the temporal network data is network-valued by definition. Such methods include the Fourier transform, autocorrelation, curve fitting, fractal analysis, and the Lyapunov exponent, to name a few.
Although one could transform a temporal network data into a numeric time series by, for example, vectorizing the adjacency matrix at each time, such a transformation is usually very high-dimensional because the number of entries of the adjacency matrix scales with $O(N^2)$, where $N$ is the number of nodes. Furthermore, temporal network data are often provided in the form of time-stamped event sequences, and how one should transform such data into numeric vectors in a computationally feasible and informative manner is nontrivial.

To fill this knowledge gap, in the present study, we propose a method for generating a continuous-valued trajectory lying in a Euclidean space given temporal network data as input in the form of time-stamped event sequences; the proposed method is also directly applicable to the input given as a discrete-time sequence of adjacency matrices, which is another major representation of temporal networks \cite{masuda2020guide}.

Network embedding is commonly understood as embedding of nodes of a network into a low-dimensional space by preserving their pairwise similarity as much as possible \cite{cao2015grarep, cavallari2017learning, perozzi2014deepwalk}.
Alternatively, an edge embedding method embeds edges of the given network into a low-dimensional space \cite{grover2016node2vec, wang2020edge2vec}.
An obvious application of network embedding is visualization of networks.
In addition, if network embedding preserves sufficient information of the original network data, one may be able to efficiently carry out downstream data mining tasks by working on the network represented in the embedding space, possibly exploiting machine learning algorithms. Such downstream tasks include link prediction \cite{trouillon2016complex}, visualization \cite{van2008visualizing}, and anomaly detection \cite{agovic2009anomaly}.
There are various node embedding methods for temporal networks, with which nodes move over time in the embedding space \cite{li2017attributed,li2018deep,nguyen2018continuous,goyal2018dyngem,saha2018models}.

In contrast to such methods, the network embedding for temporal networks that we develop in this paper does not belong to the traditional classes of network embedding such as node or edge embedding.
Instead, we aim at embedding temporal networks into a Euclidean space as a single trajectory such that the network at a given point in time is mapped to a point in the embedding space.
The proposed method combines elements of the mathematical modeling framework of tie-decay temporal network \cite{ahmad2021tie} and the landmark multidimensional scaling (LMDS), which is a dimension reduction method for general data \cite{de2004sparse}. The LMDS is an out-of-sample extension of multidimensional scaling (MDS) \cite{france2010two}. This particular combination of methods enables us to generate continuous-time trajectories describing structural and dynamical features of temporal network data given in the form of list of time-stamped events on network edges.

\section{The proposed temporal network embedding algorithm}

\subsection{Tie-decay networks}\label{tie_decay_subsection}
\begin{figure}[t]
    \centering
     \includegraphics[width=0.7\textwidth]{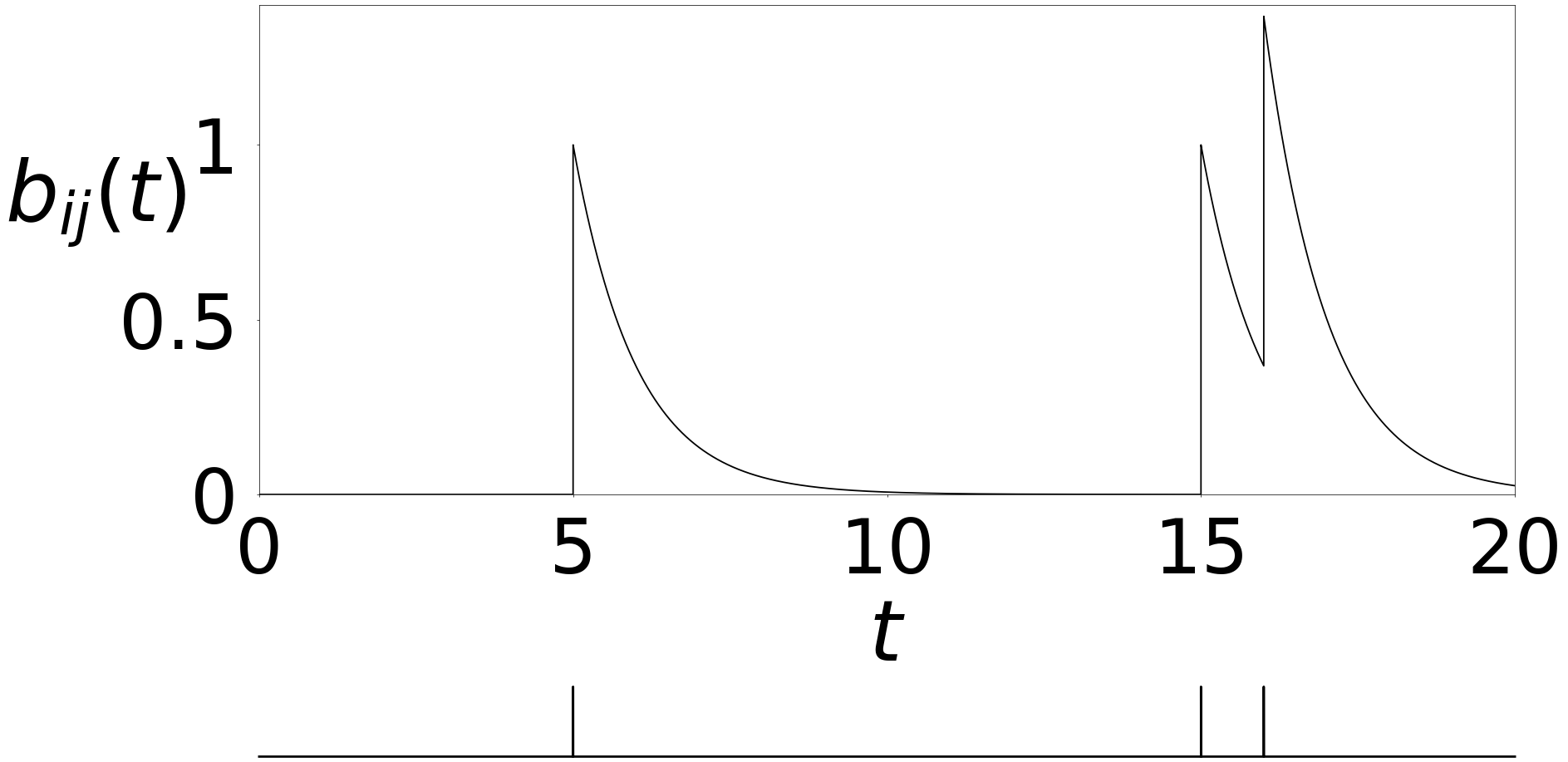}
     \caption{Time course of the edge weight, $b_{ij}(t)$, in a tie-decay network. The vertical bars represent the time of the events, $\tilde{t}_{1} = 5, \tilde{t}_{2} = 15, \tilde{t}_{3}=16$, on edge $(i, j)$. We set $\alpha=1$}
\label{fig:tie_decay_value}
\end{figure}

Consider a set of $N$ nodes and a sequence of time-stamped contact events. A contact event, which we simply refer to as the event in the following text, is assumed to occur between a pair of individuals. Therefore, each event is a tuple $(i_e, j_e, t_e)$, where $i_e$ and $j_e$ are the pair of nodes, and $t_e$ is the time of the event. We allow the event to be directed from $i_e$ to $j_e$.
We seek to embed such a data set into a low-dimensional space as a network trajectory using tie-decay networks.
A tie-decay network is a continuous-time temporal network model in which the weight of the edge (i.e., its tie) increases when an event occurs on the edge, and continuously decays over time~\cite{ahmad2021tie}.
Let $B(t)=[b_{ij}(t)]$, $t\in \mathbb{R}_{\ge 0}$ be a time-dependent $N\times N$ matrix, where $b_{ij}(t)$ represents the weight of directed edge $(i, j)$ at time $t$. We set
\begin{equation}
    b_{ij}(t) = \sum_{\ell} e^{-\alpha(t-\tilde{t}_\ell)}H(t-\tilde{t}_\ell),
\label{eq:b_ij}
\end{equation}
where $H$ is the Heaviside function, and $\tilde{t}_\ell$ is the time of the $\ell$th event on edge $(i, j)$, with
$0\leq \tilde{t}_1 \le \tilde{t}_2 \le \cdots$. Equation~\eqref{eq:b_ij} implies that an event increases the edge weight by 1 and that 
the edge weight exponentially decreases with rate $\alpha$. We show in Fig.~\ref{fig:tie_decay_value} an example time course of $b_{ij}(t)$.

Next, we consider a sequence of time steps $0\leq t_1<t_2<\cdots$, where $t_{\ell}$ is the $\ell$th time when at least one event occurs in the entire network. We denote by $A_{\ell}$ the adjacency matrix of the network constructed from all the events occurring at time $t_\ell$. The input data is then identified with a sequence of adjacency matrices $\{ A_1, A_2, \ldots \}$ and the associated times $\{ t_1, t_2, \ldots \}$. We obtain
\begin{equation}
B(t) = \sum_{\ell=1}^{\infty} e^{-\alpha(t - t_\ell)}A_\ell.
\label{tie_decay_adj_mat_terms}
\end{equation}

\subsection{Classical and landmark multidimensional scaling}\label{MDS_section}

The classical multidimensional scaling embeds a set of high-dimensional data points into a low-dimensional vector space while preserving the dissimilarity between any pair of data points as much as possible \cite{borg2005modern, france2010two, duin2005dissimilarity}. We consider a set of $M$ data points, denoted by $S= \{ x_1, \ldots, x_M \}$ in a metric space $(X,d)$, where $d$ is a distance measure. The MDS embeds each $x_i \in S$ into $\mathbb{R}^k$, where $k$ is relatively small, by carrying out the following steps. First, we construct the squared distance matrix $\Delta = [\Delta_{ij}]$, where $\Delta_{ij} = d^2(x_i,x_j)$. Second, we calculate the double mean-centered dot product matrix $D = -\frac{1}{2}H_M\Delta H_M$, where $H_M = I_M-\frac{1}{M}J_M$, matrix $I_M$ is the $M\times M$ identity matrix, and $J_M$ is the $M\times M$ matrix in which all the elements are equal to 1. Finally, we obtain the embedding coordinates in $\mathbb{R}^k$ for $x_i$ as the $i$th column of the following $k\times M$ matrix:
\begin{equation}
    L_k = \begin{bmatrix}
    \sqrt{\lambda_1}v_1^{\top}\\
    \vdots \\
    \sqrt{\lambda_k}v_k^{\top}
    \end{bmatrix},
    \label{cMDS_coordinates}
\end{equation}
where $\lambda_i$ is the $i$th largest positive eigenvalue of $D$, vector $v_\ell$ is the associated right eigenvector normalized in terms of the $\ell^2$-norm, and ${}^\top$ represents the transposition.

MDS has a time complexity of $O(M^3)$ because it requires eigenvalues and eigenvectors of $D$ \cite{gisbrecht2015metric}. Therefore, it is difficult to apply MDS for large data sets. An out-of-sample extension of a dimension reduction method calculates the exact coordinates in the embedding space only for a small fraction of data points and approximates the coordinates for the remaining data points \cite{Bengio2004Nips, van2009dimensionality}. The LMDS is an out-of-sample extension of the classical MDS \cite{de2004sparse, Desilva2003Nips, Platt2005Aistats}. The LMDS applies the MDS to a subset of the given data points called the landmarks and then uses the landmarks to calculate the embedding coordinates for the other data points. We use $S = \{x_1,\ldots,x_M\}$ as the landmarks and therefore first apply MDS to $S$. The embedding coordinates for $S$ are given by Eq.~\eqref{cMDS_coordinates}. Then, we determine the embedding coordinates of the remaining data as follows.
First, we define a $k\times M$ matrix by
\begin{equation}
    L'_k = \begin{bmatrix}
    \frac{1}{\sqrt{\lambda_1}}v_1^{\top}\\
    \vdots \\ 
    \frac{1}{\sqrt{\lambda_k}}v_k^{\top}
    \end{bmatrix}.
    \label{LMDS_eigen_projection}
\end{equation}
Second, we calculate
 \begin{equation}
        \vec{\delta}_\mu = \frac{1}{M}\sum_{i=1}^{M} \vec{\delta}_i,
        \label{LMDS_avg_coordinates}
\end{equation}
where $\vec{\delta}_i$ is the $i$th column of $\Delta$.
Third, we define the embedding coordinate for an arbitrary out-of-sample data point $z\notin S$ by
\begin{equation}
    \psi(z) = - \frac{1}{2}L'_k(\vec{\delta}_z - \vec{\delta}_\mu),
    \label{LMDS_arbitrary_coordinates}
\end{equation}
where
\begin{equation}
    \vec{\delta}_{z} = \begin{bmatrix}
    d^2(z,x_1)\\ \
    \vdots \\ 
    d^2(z,x_M)
    \end{bmatrix}.
\end{equation}

\subsection{Embedding of tie-decay networks by landmark multidimensional scaling}\label{tie_decay_embedding}

We propose a method to embed tie-decay networks in continuous time into a low-dimensional Euclidean space using LMDS. The key idea is to use the networks at all the times at which the input event happens, i.e., $t \in \{t_1, t_2, \ldots \}$, as landmarks and embed the networks at all the other times, i.e., $t \notin \{t_1, t_2, \ldots \}$, using LDMS. The algorithm proceeds as follows. First, for a given value of $ \alpha$ (i.e., rate of tie decay), we construct the tie-decay network
from the input adjacency matrices, $\{ A_1, \ldots, A_M \}$, using Eq.~\eqref{tie_decay_adj_mat_terms}, where $M$ is the number of the unique time points at which events occur on any edge.
Second, we apply classical MDS to $\{ B(t_1), \ldots, B(t_M) \}$.
Third, we run LDMS with $\{ B(t_1), \ldots, B(t_M) \}$ as landmarks as follows.
We need to embed an arbitrary adjacency matrix $\{B(t) : t\ge 0 \}$, which we call interpolating points when $t\notin \{ t_1, \ldots, t_M \}$, into $\mathbb{R}^k$ to obtain a continuous trajectory.
Using the exponential tie-decay property, we obtain
\begin{equation}
B(t) = e^{-\alpha(t-t_\ell)}B(t_\ell),
\label{matrix_B}
\end{equation}
where $\ell \in \{ 1, \ldots, M\}$ satisfies $t_\ell \le t < t_{\ell+1}$. Then, we use Eq.~\eqref{LMDS_arbitrary_coordinates} to
obtain the embedding coordinates for $B(t)$ for any $t$ as follows:
\begin{equation}
        \psi(B(t)) = -\frac{1}{2}L'_k(\vec{\delta}_B(t)-\vec{\delta}_\mu),
    \label{tie_decay_arbitrary_coordinates}
    \end{equation}
where
\begin{equation}
    \vec{\delta}_B(t) = \begin{bmatrix}
    d^2(B(t),B(t_1))\\
    \vdots \\
    d^2(B(t),B(t_M))
    \end{bmatrix}.
\end{equation}

\subsection{Distance measures for nonnegative matrices}\label{dist_measure}

To calculate $\psi(B(t))$, one needs the squared distance between the matrix $B(t)$ and each of the $M$ landmark matrix representations.
Because $d(B(t),B(t_m))=d(e^{-\alpha (t-t_\ell)}B(t_\ell),B(t_m))$, we should use a distance measure for networks for which we can easily calculate $d(c M_1, M_2)$
given $d(M_1, M_2)$, where $M_1$ and $M_2$ are general weighted adjacency matrices and $0 < c < 1$.
Here we provide three eligible distance measures.

\subsubsection{Frobenius distance}

Let $\langle\cdot,\cdot\rangle$ be the Frobenius inner product for real-valued matrices given by
\begin{equation}
\langle P,Q\rangle = \sum_{i = 1}^a\sum_{j = 1}^b p_{ij}q_{ij}\,,
\end{equation}
where $P = [p_{ij}]$ and $Q = [q_{ij}]$ are $a\times b$ matrices.
We define the induced Frobenius norm $\|\cdot\|$ by
\begin{equation}
\| P\| = \sqrt{\langle P,P\rangle}\ .
\end{equation}
We obtain the Frobenius distance by $d_F(B(t_\ell),B(t_{\ell'}))=\|B(t_\ell)-B(t_{\ell'})\|$. For $\ell,\ell'\in\{1, \ldots, M \}$, we obtain
\begin{equation}
    d^2_F(B(t_\ell),B(t_{\ell'})) = \|B(t_\ell)\|^2 - 2\langle B(t_\ell),B(t_{\ell'})\rangle + \|B(t_{\ell'})\|^2.
\label{eq:d^2_F(B_ell,B_m)}    
\end{equation}
Using Eq.~\eqref{eq:d^2_F(B_ell,B_m)}, we obtain for any $t\in [t_\ell, t_{\ell+1})$
\begin{equation}
    d_F(e^{-\alpha(t-t_\ell)}B(t_\ell),B(t_{\ell'}))=\sqrt{e^{-\alpha(t-t_\ell)}(e^{-\alpha(t-t_\ell)}-1)\|B(t_\ell)\|^2 + (1-e^{-\alpha(t-t_\ell)})\|B(t_{\ell'})\|^2 + e^{-\alpha(t-t_\ell)}d^2(B(t_\ell),B(t_{\ell'}))}\,. \label{euclideandist}
\end{equation}
Therefore, we can easily compute $\psi(B(t))$ for any $t$ by precomputing $\|B(t_\ell)\|^2$ for $\ell \in \{1, \ldots, M\}$ in addition to matrix $\Delta$.

\subsubsection{Laplacian distance}

In this and next subsections, we assume undirected networks. The Laplacian matrix $L(t)$ of a tie-decay network at time $t$ is defined by $L(t)\equiv D^{\ast}(t) - B(t)$, where $D^{\ast}(t)$ is the diagonal matrix whose $i$th diagonal entry is given by $\displaystyle\sum_{j=1}^N b_{ij}(t)$. The unnormalized Laplacian distance between two Laplacian matrices, which we refer to as the Laplacian distance for short in the following text, is defined by
\begin{equation}
    d_L(L(t_\ell),L(t_{\ell'})) = \sqrt{\sum_{k=1}^N \left[\rho_k(L(t_\ell)) - \rho_k(L(t_{\ell'})) \right]^2}\,,
\label{eq:unnormalized-Laplacian-distance-def}
\end{equation}
where  $\rho_k(P)$ is the $k$th smallest eigenvalue of symmetric matrix $P$~\cite{donnat2018tracking,wilson2008study, masuda2019detecting}. It should be noted that one can replace $\sum_{k=1}^N$ in Eq.~\eqref{eq:unnormalized-Laplacian-distance-def} by $\sum_{k=2}^{N}$ because the smallest eigenvalue of a Laplacian matrix is equal to 0. For any $t$ satisfying $t_{\ell} \le t < t_{\ell+1}$, we obtain $L(t) = e^{-\alpha(t-t_\ell)}L(t_\ell)$ and therefore
\begin{align}
    d_L(e^{-\alpha(t-t_\ell)}L(t_\ell), L(t_{\ell'})) 
    =& \sqrt{\sum_{k=2}^N \left[e^{-\alpha(t-t_\ell)}\rho_k(L(t_\ell)) - \rho_k(L(t_{\ell'})) \right]^2}.
\end{align}

\subsubsection{Bregman distance}

For a pair of networks $G_1$ and $G_2$ each with $N$ nodes, we consider the distance measure given by
\begin{equation}
F(G_1,G_2) = \sum_{k = 1}^{N}\sum_{k' = 1}^{N} f(\rho_{k}(L_1),\rho_{k'}(L_2))\left| \langle u_{k}(L_1),u_{k'}(L_2)\rangle_E\right|^m,
\label{eq:general_distance_laplacian}
\end{equation}
where $f:\mathbb{R}^2\rightarrow\mathbb{R}_{\ge 0}$ is symmetric with respect to the two arguments and satisfies the triangle inequality; $L_i$ (with $i\in \{1, 2 \}$) is the Laplacian matrix of $G_i$; $u_{k}(L_i)$ is an eigenvector of $L_i$ that is normalized in terms of the $\ell^2$-norm and associated with eigenvalue $\rho_{k}(L_i)$; $\langle \cdot,\cdot\rangle_E$ is the Euclidean inner product of two vectors; $m$ is a positive integer~\cite{wicker2013new}. Using the tie-decay networks at times $t$ and $t_m$ as the input to $F$, where $t_{\ell} \le t < t_{\ell+1}$, we obtain
\begin{equation}
F(B(t), B(t_{\ell'})) = \sum_{k = 1}^{N}\sum_{k' = 1}^{N} f(e^{-\alpha(t-t_{\ell})}\rho_{k}(L(t_{\ell})), \rho_{k'}(L(t_{\ell'})))\left| \langle u_{k}(L(t_{\ell}),u_{k'}(L(t_{\ell'}))\rangle_E\right|^m\,.
\label{eq:general_distance_laplacian_2}
\end{equation}

We assume $m = 2$, which is known to make Eq.~\eqref{eq:general_distance_laplacian} independent of the choice of the eigenvectors~\cite{wicker2013new}. An example of $F$ is given through the Bregman divergence as follows. Let $g: \Omega\rightarrow\mathbb{R}$ be a continuously differentiable and strictly convex function, where $\Omega\subseteq\mathbb{R}^n$. We define the Bregman divergence associated with $g$, which we denote by $d_g$, by
\begin{equation}
d_g(x,y) = g(x) - g(y) - \langle x-y,\nabla g(y)\rangle_E\,,
\label{Bregman_div}
\end{equation}
where $x, y \in \mathbb{R}^n$, and $\nabla g(y)$ denotes the gradient of $g(y)$. If $g(x) = \langle x,x\rangle_E$, then $d_g$ reduces to the squared $\ell^2$-norm because
\begin{align}
d_{\langle\cdot,\cdot\rangle_E}(x,y) &= \langle x,x\rangle_E - \langle y,y\rangle_E - \langle x-y,\nabla\langle y,y\rangle_E\rangle_E\notag\\
&= \langle x,x\rangle_E - \langle y,y\rangle_E - \langle x-y,2y\rangle_E\notag\\
&= \left| x - y\right|_2^2,
\end{align}
where $\left|\cdot\right|_2$ denotes the $\ell^2$-norm~\cite{banerjee2005clustering}. By setting $f = d_{\langle\cdot,\cdot\rangle_E}$, $n=1$, and $m=2$ in Eq.~\eqref{eq:general_distance_laplacian_2}, we obtain the Bregman distance, denoted by $d_B$, given by
\begin{equation}
    d_B(B(t), B(t_{\ell'})) = \sum_{k= 1}^N\sum_{k' = 1}^N
    \left[e^{-\alpha(t-t_{\ell})} \rho_{k}(L(t_\ell)) - \rho_{k'}(L(t_{\ell'}))\right]^2\langle u_{k}(L(t_\ell)), u_{k'}(L(t_{\ell'})\rangle_E^2\,,
\label{eq:d_B-def}
\end{equation}
where $t_\ell \le t < t_{\ell+1}$.
A modification of the Bregman distance that better captures some global dissimilarities between two networks such as connectivity is given by \cite{wicker2013new} 
\begin{equation}
    d_{B'}(B(t),B(t_{\ell'})) = \sum_{k = 1}^N\sum_{k' = 1}^N \frac{\left[e^{-\alpha(t-t_{\ell})}\rho_{k}(L(t_\ell))-\rho_{k'}(L(t_{\ell'}))\right]^2}
    {e^{-\alpha(t-t_{\ell})}\rho_{k}(L(t_\ell))+\rho_{k'}(L(t_{\ell'}))}\langle u_k(L(t_\ell)), u_{k'}(L(t_{\ell'}))\rangle_E^2\,.
\label{eq:d_B'-def}
\end{equation}

\subsection{Goodness of temporal network embedding}\label{goodness_of_fit_criteria}

MDS better preserves the pairwise dissimilarity between data points if the embedding dimension, $k$, is larger. Furthermore, the quality of the embedding produced by MDS may depend on the distance measure.
To select a distance measure and $k$ for MDS, we use the following goodness of fit criteria.

The first criterion uses the fact that embedding into the Euclidean space preserves the pairwise dissimilarity for all pairs of data points if and only if matrix $D = -\frac{1}{2}H_M\Delta H_M$ is positive semidefinite \cite{graepel1999classification, pekalska2001generalized}. 
The existence and magnitude of negative eigenvalues indicate the deviation from the Euclidean behavior.
Therefore, as goodness of fit, we use the ratio of the $k$ most positive eigenvalues over the sum of the absolute values of all eigenvalues \cite{graffelman2020goodness}.
Specifically, we assume that $D$ has at least $k$ positive eigenvalues and define
\begin{equation}
g = \dfrac{\sum_{\ell = 1}^k \lambda_\ell}{\sum_{\ell' = 1}^M \left| \lambda_{\ell'}\right|}\,,
\label{eq:eigenvalue_ratio}
\end{equation}
where $\lambda_1 > \cdots > \lambda_M$ are the eigenvalues of $D$. Index $g$ ranges between $0$ and $1$ and it is close to 1 when the embedding is quasi-Euclidean.

Another goodness of fit criterion for MDS is given by the normalized stress function~\cite{kruskal1964multidimensional,borg2005modern,duin2005dissimilarity}, denoted by $\sigma$, which is defined as
\begin{equation}
\sigma = \sqrt{\frac{\sum_{\ell=1}^M \sum_{\ell' = 1}^{\ell - 1} (d(x_{\ell'},x_{\ell}) - \left|\psi(x_{\ell'}) - \psi(x_{\ell}) \right|_2)^2}{\sum_{\ell=1}^M \sum_{\ell' = 1}^{\ell-1} d^2(x_{\ell'},x_\ell)}}\,.
\label{eq:stress_function}
\end{equation}
Note that
$\left|\psi(x_{\ell'}) - \psi(x_{\ell}) \right|_2$ is the distance between the two data points in the embedding space. A small $\sigma$ value indicates that MDS is accurate at representing the dissimilarity structure of the original data set.
Values of $\sigma$ less than $0.15$ have been suggested to be acceptable~\cite{borg2005modern}.

\subsection{Time complexity of the temporal network embedding}

Given a squared distance matrix $\Delta$, the computation of the embedding coordinate, given by~\eqref{tie_decay_arbitrary_coordinates}, consists of 
computing the double mean-centered dot product matrix $D$, which requires $O(M^3)$ time, 
computing the eigenvalues and eigenvectors of $D$, which requires $O(M^3)$ time, and
then computing $L'_k$, which requires $O(kM)$ time,
computing $\delta_\mu$, which requires $O(M^2)$ time, computing $\delta_B(t) - \delta_\mu$, which requires $O(M)$ time,
and computing $L'_k\left(\delta_B(t) - \delta_\mu\right)$, which requires  $O(kM)$ time. The total time complexity of these computations combined is $O(M^3)$. The time complexity for calculating $\delta_B(t)$ depends on the choice of the network distance measure. The time complexity for embedding $B(t)$ for any $t\in [t_{\tilde{\ell}}, t_{\tilde{\ell}+1})$ is as follows.

In the case of the Frobenius distance, the time complexity for computing the distance between two landmarks is $O(N^2)$. Therefore, constructing the $M\times M$ squared distance matrix takes $O(M^2N^2)$ time. By precalculating $\left\| B(t_\ell)\right\| \forall\ell\in\{1,\ldots, M\}$, which takes $O(MN^2)$ time, and storing it and the pairwise distance between the landmarks, we can compute $\delta_B(t)$ in $O(M)$ time. It follows that the total time complexity for embedding $B(t)$ is $O(M^2N^2 + M^3)$.

In the case of the Laplacian distance, the time complexity for computing the Laplacian matrices of all landmarks is $O(MN^2)$. To compute the distance between two landmarks, we need the eigenvalues of $L(t_\ell) \forall\ell\in\{1,\ldots, M\}$. Because the time complexity for calculating the eigenvalues is $O(N^3)$, it takes $O(MN^3)$ to compute $\rho_k\left(L(t_\ell)\right)$ $\forall k\in\{2,\ldots, N\}, \forall \ell\in\{1,\ldots, M\}$. Once
we store $\rho_k\left(L(t_\ell)\right)$ $\forall k\in\{2,\ldots, N\}, \forall \ell\in\{1,\ldots, M\}$, it takes $O(M^2N)$ time to compute the squared distance matrix using~\eqref{eq:unnormalized-Laplacian-distance-def} and $O(MN)$ time to compute $\delta_B(t)$. Therefore, the total time complexity for embedding $B(t)$ is $O(MN^3 + M^2N + M^3)$.

Finally, we consider the case of the Bregman distance. In addition to the time complexity of $O(MN^2)$ for computing the Laplacian matrices of all landmarks, it takes 
$O(MN^3)$ time to compute all the eigenvalues and eigenvectors of all the Laplacian matrices, i.e., $\rho_{k}(L(t_\ell))$ and $u_k(L(t_\ell))$, $\forall k$, $\forall t_{\ell}$. Given them, it takes $O(M^2N^3)$ time to compute the squared Bregman distance matrix using~\eqref{eq:d_B-def} or the squared modified Bregman distance using~\eqref{eq:d_B'-def}. Then, the time complexity for computing $\delta_B(t)$ is $O(MN^3)$. Therefore, the total time complexity of embedding $B(t)$ is $O(M^2N^3 + M^3)$.

\section{Theoretical results}

In this section, we mathematically investigate two properties of the proposed embedding algorithm for temporal networks: periodicity and boundedness.

\subsection{Periodic input events}\label{periodic_input}

We first examine the convergence of the trajectory in the embedding space when the input events periodically arrive over time. Let $\{A_\ell\}_{\ell=1}^\infty$ be an infinite sequence of adjacency matrices with each $A_\ell$ encoding the new network structure at time $t_\ell$. We assume that $\{A_\ell\}_{\ell=1}^\infty$ is periodic such that $A_{qT +k} = \tilde{A}_k$ and $t_{qT+k} = q\tilde{T} + t_k$ for each $q\in\mathbb{N}\cup\{0\}$ and $k\in\{1,\ldots, T\}$, where $T\in\mathbb{N}$ is the number of time points in one cycle at which at least one event occurs, and $\tilde{T}$ is the period in terms of absolute time.
\begin{lemma}\label{limitcycle_lemma}
For any $t\in [0, \infty)$, write $t = q\tilde{T} + t'$ such that $t_k \leq t' < t_{k+1}$ for unique values of $q\in\mathbb{N}\cup\{0\}$ and $k \in \{1, \ldots, T\}$. 
In the case of $k = T$, we interpret $t_{k+1} = \tilde{T} + t_1$. Then, we obtain $B(t) = \displaystyle\sum_{\ell=1}^{T} c_\ell(t)A_\ell$, where 
\begin{equation}
    c_\ell(t) = \begin{cases}
    \dfrac{1-e^{-\alpha(q+1)\tilde{T}}}{1-e^{-\alpha\tilde{T}}}e^{-\alpha(t-q\tilde{T}-t_{\ell})}\text{ if }\ 1\leq \ell\leq k,\\
    \dfrac{1-e^{-\alpha q\tilde{T}}}{1-e^{-\alpha\tilde{T}}}e^{-\alpha(t-(q-1)\tilde{T}-t_{\ell})}\text{ if }\ k+1\leq \ell\leq T.
    \end{cases}
    \label{periodic_coeff}
\end{equation}
In particular, the tie-decay network with the periodic input converges to the limit cycle of period $\tilde{T}$ given by
\begin{equation}
X(t') \equiv \sum_{\ell=1}^{T} \frac{e^{-\alpha(t' - t_{\ell})}}{1 - e^{-\alpha\tilde{T}}} A_{\ell}\,,
\label{eq:def limit cycle}
\end{equation}
where $0\le t'<\tilde{T}$.
\end{lemma}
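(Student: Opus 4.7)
The plan is to start from the explicit formula Eq.~\eqref{tie_decay_adj_mat_terms} for $B(t)$ and reorganize the infinite sum over input events by exploiting the periodicity assumption $A_{qT+k} = \tilde{A}_k$, $t_{qT+k} = q\tilde{T} + t_k$. Fix $t = q\tilde{T} + t'$ with $t_k \le t' < t_{k+1}$. I would index the events by pairs $(p,\ell)$ with $p \in \mathbb{N}\cup\{0\}$ and $\ell \in \{1,\ldots,T\}$, so that event $(p,\ell)$ occurs at absolute time $p\tilde{T} + t_\ell$ and contributes the matrix $A_\ell$. The Heaviside factor in Eq.~\eqref{eq:b_ij} picks out precisely those $(p,\ell)$ with $p\tilde{T} + t_\ell \le t$, i.e., $0 \le p \le q$ when $\ell \le k$ and $0 \le p \le q-1$ when $\ell > k$. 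Collecting coefficients of each $A_\ell$ gives $B(t) = \sum_{\ell=1}^{T} c_\ell(t) A_\ell$, where $c_\ell(t) = \sum_{p} e^{-\alpha(t - p\tilde{T} - t_\ell)}$ is a truncated geometric series in $e^{\alpha\tilde{T}}$.

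Next I would put $c_\ell(t)$ into closed form. Pulling $e^{-\alpha(t - t_\ell)}$ outside the sum and applying $\sum_{p=0}^{n-1} r^p = (r^n - 1)/(r - 1)$ with $r = e^{\alpha\tilde{T}}$ yields $c_\ell(t) = e^{-\alpha(t - t_\ell)}(e^{\alpha n \tilde{T}} - 1)/(e^{\alpha\tilde{T}} - 1)$, with $n = q+1$ for $\ell \le k$ and $n = q$ for $\ell > k$. Multiplying the numerator and denominator by $e^{-\alpha n \tilde{T}}$ and absorbing the resulting $e^{-\alpha(n-1)\tilde{T}}$ into the $e^{-\alpha(t - t_\ell)}$ factor recovers the two cases of Eq.~\eqref{periodic_coeff}: the prefactor $(1 - e^{-\alpha(q+1)\tilde{T}})/(1 - e^{-\alpha\tilde{T}})$ multiplying $e^{-\alpha(t - q\tilde{T} - t_\ell)}$ for $\ell \le k$, and the analogue with $q$ replaced by $q-1$ in the exponential and $q+1$ replaced by $q$ in the numerator for $\ell > k$.

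For the limit-cycle statement I would fix $t'$ and let $q \to \infty$. Since $\alpha > 0$, the prefactors $1 - e^{-\alpha(q+1)\tilde{T}}$ and $1 - e^{-\alpha q\tilde{T}}$ converge exponentially fast to $1$, while the remaining exponentials $e^{-\alpha(t - q\tilde{T} - t_\ell)} = e^{-\alpha(t' - t_\ell)}$ and $e^{-\alpha(t - (q-1)\tilde{T} - t_\ell)} = e^{-\alpha(t' + \tilde{T} - t_\ell)}$ are $q$-independent. Because $\sum_{\ell=1}^{T} c_\ell(t) A_\ell$ has only $T$ summands, termwise passage to the limit is immediate, and the resulting matrix is periodic in $t'$ with period $\tilde{T}$. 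A final rewriting using $e^{-\alpha(t' + \tilde{T} - t_\ell)} = e^{-\alpha\tilde{T}} e^{-\alpha(t' - t_\ell)}$ identifies the limit, piece by piece on each subinterval $[t_k, t_{k+1})$, with the expression $X(t')$ stated in Eq.~\eqref{eq:def limit cycle}.

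The main obstacle is purely bookkeeping rather than analysis: the partition of the doubly indexed events into the classes $\ell \le k$ and $\ell > k$ according to whether the $\ell$th event of the current cycle has already fired, aligning the two geometric-series lengths $q+1$ and $q$, and keeping track of the exponential rescalings needed to convert the raw closed form $e^{-\alpha(t - t_\ell)}(e^{\alpha n \tilde{T}} - 1)/(e^{\alpha\tilde{T}} - 1)$ into the symmetrized shape of Eq.~\eqref{periodic_coeff}. No tool beyond the finite geometric-series formula and termwise passage to the limit in a finite sum is required.
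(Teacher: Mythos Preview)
Your proposal is correct and follows essentially the same route as the paper's own proof: both start from Eq.~\eqref{tie_decay_adj_mat_terms}, split the sum over events according to whether $\ell\le k$ or $\ell>k$, sum the resulting finite geometric series in $e^{-\alpha\tilde{T}}$ to obtain Eq.~\eqref{periodic_coeff}, and then pass to the limit $q\to\infty$ termwise in the finite sum $\sum_{\ell=1}^{T} c_\ell(t)A_\ell$. The only cosmetic difference is that the paper indexes periods by $j$ and writes the geometric sum as $\sum_{j=0}^{q}e^{-\alpha(q-j)\tilde{T}}$ directly in the form $(1-e^{-\alpha(q+1)\tilde{T}})/(1-e^{-\alpha\tilde{T}})$, whereas you first write $\sum_{p=0}^{n-1}(e^{\alpha\tilde{T}})^p$ and then renormalize; the algebra is identical.
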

\begin{proof}
Using Eq.~\eqref{tie_decay_adj_mat_terms}, we obtain
\begin{align}
B(t) =& \sum_{\ell=1}^{qT + k} e^{-\alpha (t - t_{\ell})}A_{\ell}\notag\\
  =& \sum_{\ell=1}^{k}\sum_{j=0}^{q} e^{-\alpha\left[t - (j\tilde{T} + t_{\ell})\right]}A_\ell + \sum_{\ell = k+1}^{T}\sum_{j = 0}^{q-1}e^{-\alpha \left[t - (j\tilde{T} + t_{\ell})\right]}A_\ell\notag\\
  =& \sum_{\ell=1}^{k}e^{-\alpha(t'-t_{\ell})}A_\ell\sum_{j=0}^{q} e^{-\alpha (q-j)\tilde{T}} + \sum_{\ell = k+1}^{T}e^{-\alpha(\tilde{T}+t'-t_{\ell})} A_\ell\sum_{j = 0}^{q-1}e^{-\alpha(q-1-j)\tilde{T}}\notag\\
  =&\sum_{\ell=1}^{k}\frac{1-e^{-\alpha(q+1)\tilde{T}}}{1-e^{-\alpha\tilde{T}}}e^{-\alpha(t'-t_{\ell})}A_\ell + \sum_{\ell=k+1}^{T}\frac{1-e^{-\alpha q\tilde{T}}}{1-e^{-\alpha\tilde{T}}}e^{-\alpha(\tilde{T}+t'-t_{\ell})}A_\ell\notag\\
=& \sum_{\ell=1}^{T} c_{\ell}(t)A_{\ell}.
\label{eq:derive c_ell}
\end{align}
Equation~\eqref{periodic_coeff} implies that 
\begin{equation}
    \lim_{q\rightarrow\infty} c_{\ell}(q\tilde{T} + t') = \frac{e^{-\alpha(t' - t_{\ell})}}{1 - e^{-\alpha\tilde{T}}}
\end{equation}
for any $\ell\in\{1,\ldots, T\}$ and any $t'\in [0, \tilde{T})$. Therefore, we obtain Eq.~\eqref{eq:def limit cycle}.
\end{proof}

\begin{theorem}\label{thm:periodic}
The embedding coordinates of tie-decay networks, given by Eq.~\eqref{tie_decay_arbitrary_coordinates}, with periodic input converge to a limit cycle.
\end{theorem}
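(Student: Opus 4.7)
The plan is to reduce the claim to the matrix-level convergence already established in Lemma~\ref{limitcycle_lemma}, by exploiting that the landmark embedding map is continuous in its matrix argument. Inspecting Eq.~\eqref{tie_decay_arbitrary_coordinates}, I would factor $\psi$ as the squared-distance map $B\mapsto \vec{\delta}_B$ followed by the fixed affine map $\vec{v}\mapsto -\tfrac{1}{2}L'_k(\vec{v}-\vec{\delta}_\mu)$. The affine map is trivially continuous, and since the landmarks $B(t_1),\ldots,B(t_M)$ are fixed throughout, it suffices to show that for any $t'\in [0,\tilde{T})$ the vector $\vec{\delta}_B(q\tilde{T}+t')$ converges, as $q\to\infty$, entrywise to a function that is $\tilde{T}$-periodic in $t'$.

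Next, I would invoke Lemma~\ref{limitcycle_lemma} directly: writing $t = q\tilde{T}+t'$, the lemma yields $B(t)\to X(t')$ entrywise, where $X(t')$ is defined in Eq.~\eqref{eq:def limit cycle} and is manifestly $\tilde{T}$-periodic. To transport this to $\vec{\delta}_B$, I would argue that the chosen distance measure $d$ is continuous in its matrix arguments. For the Frobenius distance this is immediate from Eq.~\eqref{eq:d^2_F(B_ell,B_m)}, which is a polynomial in matrix entries. For the Laplacian distance it follows from the continuity of eigenvalues as roots of the characteristic polynomial. For the Bregman-type distances $d_B$ and $d_{B'}$, the choice $m=2$ renders Eq.~\eqref{eq:general_distance_laplacian} independent of the chosen eigenvector basis, so the expression depends continuously on both matrix arguments even at spectra with multiplicities. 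Consequently, each entry $d^2(B(q\tilde{T}+t'), B(t_m))$ of $\vec{\delta}_B$ converges to $d^2(X(t'),B(t_m))$.

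Combining the two steps, I would conclude
\[
\lim_{q\to\infty}\psi(B(q\tilde{T}+t')) \;=\; -\tfrac{1}{2}L'_k\bigl(\vec{\delta}_X(t')-\vec{\delta}_\mu\bigr)\,,
\]
where $\vec{\delta}_X(t')$ denotes the squared-distance vector from $X(t')$ to the $M$ landmarks. Because $\vec{\delta}_X$ is $\tilde{T}$-periodic in $t'$ and the outer map is fixed, the right-hand side is a $\tilde{T}$-periodic trajectory in $\mathbb{R}^k$, i.e., the asserted limit cycle.

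The main obstacle I anticipate is justifying continuity of the spectral-type distances at matrices whose Laplacians have repeated eigenvalues, since the individual eigenvectors fail to be continuous there. The Frobenius case is essentially trivial and can be dispatched first to illustrate the argument, while the Laplacian and Bregman cases must lean on, respectively, continuity of the spectrum as an unordered multiset and the eigenvector-basis invariance granted by $m=2$, which is what makes the bilinear sum in Eq.~\eqref{eq:general_distance_laplacian_2} continuous across eigenvalue crossings.
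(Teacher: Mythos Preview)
Your proposal is correct and follows essentially the same route as the paper: invoke Lemma~\ref{limitcycle_lemma} for matrix-level convergence $B(q\tilde{T}+t')\to X(t')$, pass the limit through the distance map by continuity of $d$ to get $\vec{\delta}_B(q\tilde{T}+t')\to\vec{\delta}_X(t')$, and then apply the fixed affine map $-\tfrac{1}{2}L'_k(\cdot-\vec{\delta}_\mu)$ to conclude $\psi(B(q\tilde{T}+t'))\to\psi(X(t'))$. The paper's proof simply asserts continuity of $d$ in one line, whereas you go further and justify it case by case (Frobenius, Laplacian, Bregman with $m=2$), including the delicate point about eigenvalue multiplicities; this extra care is a refinement rather than a different argument.
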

\begin{proof}
Under the periodic input, Lemma~\ref{limitcycle_lemma} guarantees that $\displaystyle\lim_{q\to\infty} B(q\tilde{T} + t')$ converges to the limit cycle $X(t')$. 
Furthermore, because of the continuity of the distance metric function, $d$, the $\displaystyle\lim_{q\to\infty}$ operation and $d$ are interchangeable. Therefore, we obtain
\begin{equation}
\displaystyle\lim_{q\to\infty}d(B(q\tilde{T} + t'), B(t_\ell)) = d(\displaystyle\lim_{q\to\infty}B(q\tilde{T} + t'), B(t_\ell)) = d(X(t'), B(t_\ell))
\label{eq:lim-and-d-exchange}
\end{equation}
for any $\ell \in \{ 1, \ldots, T\}$. Using Eq.~\eqref{eq:lim-and-d-exchange}, we obtain
\begin{equation}
    \lim_{q\to\infty} \Vec{\delta}_{B}(q\tilde{T} + t') = \lim_{q\to\infty}\begin{bmatrix} d(B(q\tilde{T} + t'), B(t_1))\\
    \vdots\\
    d(B(q\tilde{T} + t'), B(t_T))
    \end{bmatrix} 
%
%
= \Vec{\delta}_X(t').
\label{eq:landmark_dist_limit_cycle}
\end{equation}
By combining Eqs.~\eqref{tie_decay_arbitrary_coordinates} and~\eqref{eq:landmark_dist_limit_cycle}, we obtain
\begin{align}
     \lim_{q\to\infty}\psi(B(q\tilde{T} + t'))
=  - \lim_{q\to\infty} \frac{1}{2}L'(\vec{\delta}_{B}(q\tilde{T} + t')-\vec{\delta}_\mu).
    =&-\frac{1}{2}L'(\Vec{\delta}_X(t')-\vec{\delta}_\mu) = \psi(X(t')).
\end{align}
Therefore, the limit cycle in the embedding space is given by $\psi(X(t'))$, where $0\le t' < \tilde{T}$.
\end{proof}

To illustrate Theorem~\ref{thm:periodic} by an example, we consider a temporal network on $N=3$ nodes. We give a periodic input with $T=4$ input matrices defined by
\begin{equation}
\tilde{A}_{1} = \begin{bmatrix}
0 & 1 & 0\\
1 & 0 & 1\\
0 & 1 & 0
\end{bmatrix}, \quad
\tilde{A}_{2} = \begin{bmatrix}
0 & 1 & 0\\
1 & 0 & 0\\
0 & 0 &0
\end{bmatrix}, \quad
\tilde{A}_{3} = \begin{bmatrix}
0 & 0 & 1\\
0 & 0 & 1\\
1 & 1 & 0
\end{bmatrix}, \quad
\tilde{A}_{4} = \begin{bmatrix}
0 & 1 & 1\\
1 & 0 & 1\\
1 & 1 & 0
\end{bmatrix}.
\end{equation}
We set $t_{\ell} = \ell$ for each $\ell\in\mathbb{N}$ and $\alpha = 0.15$. The network trajectory in the two-dimensional embedding space is shown in Fig.~\ref{fig:limit_cycle}. Figures~\ref{fig:limit_cycle}(a) and \ref{fig:limit_cycle}(b) show a trajectory obtained with the Frobenius and the Laplacian distance measure, respectively. As expected, both trajectories approach a limit cycle with period $4$.

Under the Frobenius distance measure and $\alpha = 0.15$, we obtain $g = 0.617$
%
%
and $\sigma = 0.32$ with $k=1$, $g = 0.951$
%
%
and $\sigma = 0.06$ with $k=2$, and $g > 0.999$
%
%
and $\sigma = 9.35\times 10^{-15}$ with $k=3$. We remind that $g$ and $\sigma$ are goodness of embedding measures defined by Eqs.~\eqref{eq:eigenvalue_ratio} and \eqref{eq:stress_function}, respectively. Under the Laplacian distance measure,
$D$ has only two positive eigenvalues. Therefore, we consider $g$ and $\sigma$ when $k =1,2$. 
When $\alpha = 0.15$, we obtain $g = 0.820$
%
%
and $\sigma = 0.26$ with $k=1$ and $g > 0.999$
%
%
and $\sigma = 5.34\times 10^{-15}$ with $k=2$. Therefore, for both distance measures, we conclude that $k=2$ is sufficient for embedding this tie-decay network.

\begin{figure}[t]
    \centering
     \includegraphics[width=\textwidth]{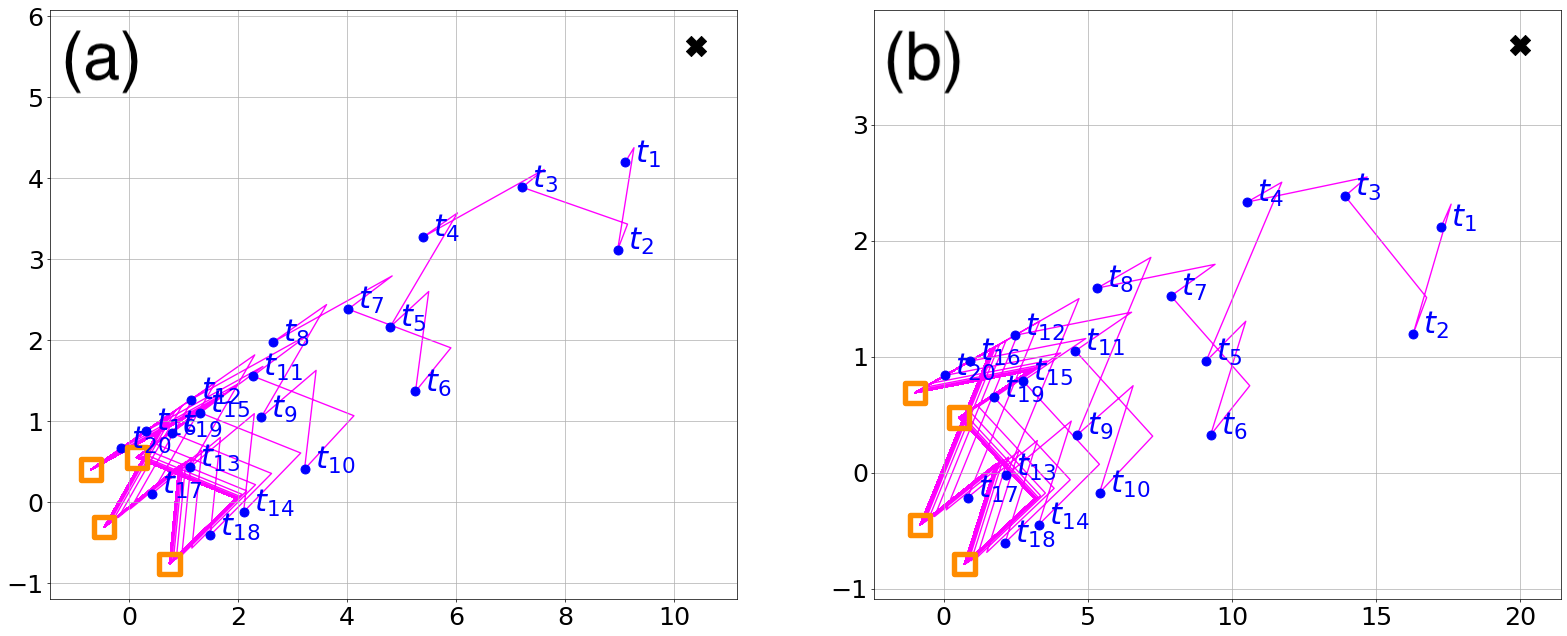}
     \caption{Network trajectories with periodic input. (a) A trajectory under the Frobenius distance measure. (b) A trajectory under the Laplacian distance measure. The input adjacency matrices are periodic with four adjacency matrices. We set $\alpha = 0.15$. The squares indicate the coordinates of the limit cycle right after an input adjacency matrix arrives. The crosses indicate the coordinate of the empty network.}
\label{fig:limit_cycle}
\end{figure}

\subsection{Boundedness of trajectories}\label{subsect:bound_coeff}

Next we investigate boundedness of the network trajectory, i.e., the property of the trajectory that it remains within a particular bounded domain. We assume that adjacency matrices $\{A_\ell \}_{\ell=1}^{\infty}$ arrive with regular intervals, i.e., $t_\ell = \ell\tau$, where $\tau (>0)$ is a constant. However, the extension of the following results to the case of aperiodic inputs, with the minimal interval between two consecutive input adjacency matrices being equal to $\tau$, is straightforward. We also assume that each adjacency matrix $A_\ell$ is selected from a finite set, which we denote by $\{ \tilde{A}_m \}_{m=1}^M$. In the periodic input case discussed in the previous section, we obtain $M=T$ and $A_\ell = \tilde{A}_{((\ell -1) \text{ mod } T)+1}$.
\begin{lemma}\label{lemma:bound_coeff}
For any sequence of input adjacency matrices $\{A_\ell\}_{\ell=1}^\infty$, where each $A_\ell\in\{ \tilde{A}_1,\ldots,\tilde{A}_M\}$, let $B(t)$ be the corresponding tie-decay matrix, which we write as 
\begin{equation}
B(t) = \displaystyle\sum_{m=1}^{M}\tilde{c}_m(t)\tilde{A}_m
\label{eq:B(t)_for_boundedness}
\end{equation}
with $\tilde{c}_m(t) \ge 0$. Then, each $\tilde{c}_m(t)$ is bounded.
\end{lemma}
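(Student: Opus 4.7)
The statement is essentially about summability of geometric exponential decay when input arrivals are at least $\tau$ apart. The plan is to make the decomposition in Eq.~\eqref{eq:B(t)_for_boundedness} explicit by grouping the terms of Eq.~\eqref{tie_decay_adj_mat_terms} according to which $\tilde{A}_m$ each $A_\ell$ equals, and then apply a geometric sum.

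\textbf{Step 1: explicit form of the coefficients.} For each $m\in\{1,\ldots,M\}$, let $I_m(t)=\{\ell\in\mathbb{N}:A_\ell=\tilde{A}_m\text{ and }t_\ell\le t\}$. Substituting $t_\ell=\ell\tau$ into Eq.~\eqref{tie_decay_adj_mat_terms} and reorganizing the sum over $\ell$ into a double sum over $m$ and $\ell\in I_m(t)$ yields
\begin{equation}
B(t)=\sum_{m=1}^{M}\Bigl(\sum_{\ell\in I_m(t)}e^{-\alpha(t-\ell\tau)}\Bigr)\tilde{A}_m,
\end{equation}
so it is natural to take $\tilde{c}_m(t)=\sum_{\ell\in I_m(t)}e^{-\alpha(t-\ell\tau)}$. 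These coefficients are manifestly nonnegative and realize the desired decomposition.

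\textbf{Step 2: the uniform bound.} I bound $\tilde{c}_m(t)$ by dropping the restriction $A_\ell=\tilde{A}_m$, keeping only $\ell\tau\le t$. Writing $t=k\tau+s$ with $k=\lfloor t/\tau\rfloor$ and $0\le s<\tau$,
\begin{equation}
0\le\tilde{c}_m(t)\le\sum_{\ell=1}^{k}e^{-\alpha(t-\ell\tau)}=e^{-\alpha s}\sum_{j=0}^{k-1}e^{-\alpha j\tau}=e^{-\alpha s}\cdot\frac{1-e^{-\alpha k\tau}}{1-e^{-\alpha\tau}}\le\frac{1}{1-e^{-\alpha\tau}}.
\end{equation}
The last quantity is finite because $\alpha,\tau>0$, so each $\tilde{c}_m(t)$ is bounded uniformly in $t\ge 0$ (and in the choice of the input sequence).

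\textbf{Anticipated obstacles.} The proof itself is a short geometric-series calculation; the only subtle point is that the decomposition in Eq.~\eqref{eq:B(t)_for_boundedness} is not a priori unique when the matrices $\tilde{A}_1,\ldots,\tilde{A}_M$ are linearly dependent, so one must specify a particular choice of $\tilde{c}_m(t)$ before asserting boundedness. The grouping in Step 1 provides the canonical choice and is what makes the nonnegativity assumption $\tilde{c}_m(t)\ge 0$ automatic. Extending the bound to aperiodic inputs with minimal interval $\tau$ only requires replacing $\ell\tau$ with $t_\ell$ and using $t_{\ell+1}-t_\ell\ge\tau$ to dominate the sum by the same geometric series, as claimed in the paragraph preceding the lemma.
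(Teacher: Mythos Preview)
Your proof is correct and follows essentially the same approach as the paper: both bound each nonnegative $\tilde{c}_m(t)$ by the total sum $\sum_{m'}\tilde{c}_{m'}(t)=\sum_{j\le\ell}e^{-\alpha(t-t_j)}$ and evaluate this as a geometric series to obtain the uniform bound $1/(1-e^{-\alpha\tau})$. Your Step~1 is slightly more explicit than the paper in constructing the canonical coefficients and flagging the non-uniqueness issue when the $\tilde{A}_m$ are linearly dependent, but the core argument is identical.
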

\begin{proof}
We first show that $\displaystyle\sum_{m=1}^{M}\tilde{c}_m(t)$ is bounded for any input sequence, hence so is each $\tilde{c}_m(t)$. Consider an arbitrary $t\in[t_{\ell},t_{\ell+1})$, for which we can write $t = t_\ell + \Delta t$ with $\Delta t\in [0,\tau)$. For each $m\in\{1,\ldots,M\}$, we obtain
\begin{align}
    0\leq \tilde{c}_{m}(t) &\leq \sum_{m'=1}^{M} \tilde{c}_{m'}(t)\notag\\
    &= \sum_{j=1}^{\ell} e^{-\alpha(t - t_j)}\notag\\
    &= e^{-\alpha \Delta t} \sum_{j = 0}^{\ell-1} e^{-\alpha j\tau}\notag \\
    &\leq e^{-\alpha \Delta t} \frac{1-e^{-\alpha \ell \tau}}{1-e^{-\alpha\tau}}\notag\\
    &< \frac{1}{1 - e^{-\alpha\tau}}\,.
    \label{eq:boundedness_for_c_ell}
\end{align}
\end{proof}

Now we show a boundedness of the network trajectory constructed with the Frobenius distance.
\begin{theorem}
If we use the Frobenius distance for MDS, the embedding coordinate $\psi(B(t))$ for any $t \in [0, \infty)$ is bounded as follows:
\begin{equation}
\left|\psi(B(t))\right|_2 < \frac{1}{2}\left[\sqrt{M}\left(\frac{\sum_{m=1}^M\left\|\tilde{A}_m\right\|}{1 - e^{-\alpha\tau}}\right)^2 \sqrt{\sum_{\ell = 1}^k \frac{1}{\lambda_\ell}} + \left|L'_k\vec{\delta}_\mu \right|_2\right].
\label{eq:B(t)-bounded-final}
\end{equation}
\end{theorem}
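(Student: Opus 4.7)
The plan is to apply the triangle inequality to the defining formula~\eqref{tie_decay_arbitrary_coordinates} for $\psi(B(t))$, giving
\begin{equation*}
|\psi(B(t))|_2 \leq \tfrac{1}{2}\bigl(|L'_k\vec{\delta}_B(t)|_2 + |L'_k\vec{\delta}_\mu|_2\bigr).
\end{equation*}
The second term already appears verbatim in the statement, so the whole problem reduces to a uniform-in-$t$ estimate of $|L'_k\vec{\delta}_B(t)|_2$.

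Next I would unpack $L'_k$ using its row structure in Eq.~\eqref{LMDS_eigen_projection}:
\begin{equation*}
|L'_k\vec{\delta}_B(t)|_2^2 = \sum_{\ell=1}^k \frac{(v_\ell^\top \vec{\delta}_B(t))^2}{\lambda_\ell}.
\end{equation*}
The Cauchy--Schwarz inequality combined with $|v_\ell|_2 = 1$ yields $|v_\ell^\top \vec{\delta}_B(t)| \leq |\vec{\delta}_B(t)|_2$, so the task reduces to controlling the entries $d_F^2(B(t), B(t_m))$ of $\vec{\delta}_B(t)$.

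To bound those entries, I would appeal to Lemma~\ref{lemma:bound_coeff}. Writing both $B(t)$ and $B(t_m)$ in the form $\sum_{m'=1}^{M}\tilde{c}_{m'}(\cdot)\tilde{A}_{m'}$, subtracting, and applying the triangle inequality for the Frobenius norm gives
\begin{equation*}
d_F(B(t), B(t_m)) \leq \sum_{m'=1}^M |\tilde{c}_{m'}(t) - \tilde{c}_{m'}(t_m)|\,\|\tilde{A}_{m'}\| < \frac{\sum_{m'=1}^M \|\tilde{A}_{m'}\|}{1 - e^{-\alpha\tau}},
\end{equation*}
where the strict inequality uses the bound $0 \leq \tilde{c}_{m'}(\cdot) < 1/(1-e^{-\alpha\tau})$ from Eq.~\eqref{eq:boundedness_for_c_ell} together with the elementary fact that $|a-b| \leq \max(a,b)$ for nonnegative $a, b$. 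Squaring this estimate produces the key factor $\bigl(\sum_m \|\tilde{A}_m\|/(1-e^{-\alpha\tau})\bigr)^2$ in the theorem, and chaining back through the two preceding displays and adding $\tfrac{1}{2}|L'_k\vec{\delta}_\mu|_2$ furnishes a bound of the desired shape.

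The step I expect to be the most delicate is matching the precise constant on the right-hand side rather than inventing any new inequality: the naive route through $|\vec{\delta}_B(t)|_2 = \sqrt{\sum_m d_F^4(B(t), B(t_m))}$ threatens to introduce a combinatorial factor growing with the number of landmarks $M$, and tightening the aggregation of the entrywise bound---for instance by replacing the plain Cauchy--Schwarz step with a direct bound on $|v_\ell^\top \vec{\delta}_B(t)|$ that uses the uniform pointwise bound on $d_F^2(B(t), B(t_m))$ without summing squared entries---is what allows the cleaner form $S^2\sqrt{\sum_\ell 1/\lambda_\ell}$ appearing in the statement.
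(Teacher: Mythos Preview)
Your approach is the same as the paper's: apply the triangle inequality to Eq.~\eqref{tie_decay_arbitrary_coordinates}, bound $d_F(B(t),B(t_\ell))$ via Lemma~\ref{lemma:bound_coeff} and the triangle inequality for the Frobenius norm (the paper obtains exactly your estimate $d_F<S\equiv \sum_m\|\tilde A_m\|/(1-e^{-\alpha\tau})$), and then control $|L'_k\vec\delta_B(t)|_2$ through the row structure of $L'_k$.

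Your concern in the final paragraph is well placed, and it lands on a genuine gap in the paper's own argument. The paper writes $|L'_k\vec\delta_B(t)|_2^2=\langle (L'_k)^\top L'_k\vec\delta_B(t),\vec\delta_B(t)\rangle_E$ (correct), then asserts in Eq.~\eqref{eq:matrix_Ltop_L} that $(L'_k)^\top L'_k$ is the $k\times k$ diagonal $\mathrm{diag}(1/\lambda_1,\ldots,1/\lambda_k)$, and in Eq.~\eqref{eq:bound_frob_coord_2} equates the quadratic form with $\sum_{\ell=1}^k\lambda_\ell^{-1}d_F^4(B(t),B(t_\ell))$. But $(L'_k)^\top L'_k$ is $M\times M$, not $k\times k$; it is $L'_k(L'_k)^\top$ that equals $\mathrm{diag}(1/\lambda_\ell)$. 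The correct value of the quadratic form is $\sum_{\ell=1}^k\lambda_\ell^{-1}(v_\ell^\top\vec\delta_B(t))^2$, precisely what you wrote, and this is not in general $\sum_{\ell=1}^k\lambda_\ell^{-1}d_F^4(B(t),B(t_\ell))$. So the paper's derivation does not justify the stated constant either; a straightforward Cauchy--Schwarz bound on $v_\ell^\top\vec\delta_B(t)$ picks up an extra factor depending on the number of landmarks (via $|\vec\delta_B(t)|_2$ or $|v_\ell|_1$), exactly as you anticipated. Your plan yields a valid bound of the same shape with that additional factor; matching the constant in the statement would require an argument that neither you nor the paper supplies.
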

\begin{proof}

Using Eqs.~\eqref{eq:B(t)_for_boundedness} and \eqref{eq:boundedness_for_c_ell}, we obtain
\begin{align}
    d_F(B(t),B(t_\ell)) &= \left\|B(t) - B(t_\ell)\right\|\notag\\
    &= \left\|\sum_{m=1}^M \left[\tilde{c}_m(t)-\tilde{c}_m(t_\ell)\right]\tilde{A}_m\right\|\notag\\
    &\le \sum_{m=1}^M \left| \tilde{c}_m(t) -\tilde{c}_m(t_\ell)\right|\left\|\tilde{A}_m\right\|\notag\\
    &< \frac{1}{1 - e^{-\alpha\tau}}\sum_{m=1}^M\left\|\tilde{A}_m\right\|.
\label{eq:Frob_measure_bound}
\end{align}
Next, we evaluate the distance of the embedding coordinates of $B(t)$ from the origin. Using Eq.~\eqref{tie_decay_arbitrary_coordinates}, we obtain
\begin{align}
\left|\psi(B(t))\right|_2&= \left|-\frac{1}{2}L'_k(\vec{\delta}_B(t)-\vec{\delta}_\mu)\right|_2\notag\\
&\leq \frac{1}{2}\left(\left|L'_k \vec{\delta}_B(t)\right|_2 + \left|L'_k\vec{\delta}_\mu\right|_2 \right).
\label{eq:bound_frob_coord_1}
\end{align}
We obtain
\begin{align}
\left|L'_k\vec{\delta}_B(t)\right|_2^2
    &= \sum_{\ell = 1}^k \langle \frac{1}{\sqrt{\lambda_{\ell}}} v_{\ell},\vec{\delta}_B(t) \rangle_E^2.\label{eq:L'delta}
\end{align}
By applying Cauchy-Schwarz inequality to Eq.~\eqref{eq:L'delta} and using the fact that the eigenvectors of $D$ are normalized, we obtain
\begin{align}
\left|L'_k\vec{\delta}_B(t)\right|_2^2 &\le \sum_{\ell=1}^k \frac{1}{\lambda_{\ell}}\left| v_{\ell}\right|_2^2 \left|\vec{\delta}_B(t)\right|_2^2\notag\\
&= \sum_{\ell=1}^k \frac{1}{\lambda_{\ell}} \left|\vec{\delta}_B(t)\right|_2^2\notag\\
&= \sum_{\ell=1}^k \frac{1}{\lambda_{\ell}}\sum_{\ell' = 1}^M d_F^4(B(t),B(t_{\ell'}))\label{eq:L'delta_distance}
\end{align}
By combining Eqs.~\eqref{eq:Frob_measure_bound} and \eqref{eq:L'delta_distance}, we obtain
\begin{align}
\left|L'_k\vec{\delta}_B(t)\right|_2^2 &< \sum_{\ell=1}^k \frac{1}{\lambda_{\ell}}\sum_{\ell'=1}^M \left[\frac{1}{\left(1-e^{-\alpha\tau}\right)^4}\sum_{m=1}^M \left(\left\| \tilde{A}_m\right\|\right)^4\right]\notag\\
&= \sum_{\ell=1}^k \frac{1}{\lambda_{\ell}} \left[\frac{M}{\left(1-e^{-\alpha\tau}\right)^4}\sum_{m=1}^M \left(\left\| \tilde{A}_m\right\|\right)^4\right]. \label{eq:L'delta_bound}
\end{align}
By substituting Eq.~\eqref{eq:L'delta_bound} in Eq.~\eqref{eq:bound_frob_coord_1}, we obtain Eq.~\eqref{eq:B(t)-bounded-final}.

\end{proof}

Next we show the boundedness when we use the Laplacian distance for the MDS. Recall that $L(t)$ denotes the Laplacian matrix of the tie-decay matrix $B(t)$.
\begin{lemma}\label{lemma_bound}
Let $M$ be an $N\times N$ matrix, and $L_M$ be the Laplacian matrix induced by $M$. Then, the map $M\mapsto L_M$ is a linear map. 
\end{lemma}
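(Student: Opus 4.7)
The plan is to prove linearity directly from the definition by splitting the map $M \mapsto L_M$ into its two constituent pieces: the degree operator $M \mapsto D^*_M$ and the identity $M \mapsto M$, and showing that both are linear.

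First, I would recall the definition used earlier in the paper: $L_M = D^*_M - M$, where $D^*_M$ is the $N\times N$ diagonal matrix whose $i$-th diagonal entry is $\sum_{j=1}^N m_{ij}$, and $m_{ij}$ denotes the $(i,j)$-entry of $M$. The map $M \mapsto M$ is obviously linear, so the whole task reduces to showing that $M \mapsto D^*_M$ is linear.

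Second, I would verify linearity of the degree map entry by entry. Given matrices $M_1, M_2$ and scalars $a, b\in\mathbb{R}$, the $(i,i)$-entry of $D^*_{aM_1 + bM_2}$ is
\begin{equation}
\sum_{j=1}^N \left(a(M_1)_{ij} + b(M_2)_{ij}\right) = a\sum_{j=1}^N (M_1)_{ij} + b\sum_{j=1}^N (M_2)_{ij} = a(D^*_{M_1})_{ii} + b(D^*_{M_2})_{ii},
\end{equation}
while all off-diagonal entries vanish on both sides. Hence $D^*_{aM_1 + bM_2} = aD^*_{M_1} + bD^*_{M_2}$.

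Third, I would combine the two pieces to conclude
\begin{equation}
L_{aM_1 + bM_2} = D^*_{aM_1 + bM_2} - (aM_1 + bM_2) = a(D^*_{M_1} - M_1) + b(D^*_{M_2} - M_2) = aL_{M_1} + bL_{M_2},
\end{equation}
which is the desired linearity. There is no real obstacle here: the statement is essentially a bookkeeping exercise, and the only subtlety worth flagging is that $D^*_M$ depends on $M$ through a simple linear functional (the row sum), which is what guarantees that the nonlinear-looking step of forming a diagonal degree matrix is in fact linear.
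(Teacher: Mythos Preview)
Your proof is correct and follows essentially the same approach as the paper: both establish linearity of the degree map $M\mapsto D^*_M$ first (you do it entry by entry, the paper simply states it) and then combine with the identity map to get $L_{aM_1+bM_2}=aL_{M_1}+bL_{M_2}$. The only cosmetic difference is that the paper checks the single combination $cM_1+M_2$ whereas you check the general $aM_1+bM_2$, which is equivalent.
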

\begin{proof}
Let $M_1$ and $M_2$ be $N\times N$ matrices and let $D^{\ast}_{M_1}$, $D^{\ast}_{M_2}$, and $D^{\ast}_{cM_1 + M_2}$ be the degree matrices of $M_1$, $M_2$ and $cM_1 + M_2$, respectively, where $c$ is a constant. One obtains
\begin{equation}
D^{\ast}_{cM_1 + M_2} = cD^{\ast}_{M_1} + D^{\ast}_{M_2}\,.
\label{eq:degree_matrix}
\end{equation}
Using Eq.~\eqref{eq:degree_matrix}, we obtain
\begin{align}
L_{cM_1 + M_2} &= D^{\ast}_{cM_1 + M_2} - (cM_1 + M_2)\notag\\
&= cD^{\ast}_{M_1} + D^{\ast}_{M_2}- (cM_1 + M_2)\notag\\
&= c(D^{\ast}_{M_1} - M_1) + (D^{\ast}_{M_2} - M_2)\notag\\
&= cL_{M_1} + L_{M_2}\,,
\label{eq:linearity_laplacian}
\end{align}
which proves the linearity of the map.
\end{proof}
By applying Lemma~\ref{lemma_bound} to the tie-decay adjacency matrix $\displaystyle B(t) =  \sum_{m=0}^M \tilde{c}_m(t)\tilde{A}_m$, we express $L(t)$ as follows:
\begin{equation}
L(t) = \tilde{c}_1(t)L_{\tilde{A}_1} + \cdots + \tilde{c}_M(t)L_{\tilde{A}_M}\,,\label{eq:linearity_tie_decay_laplacian}
\end{equation}
where $L_{\tilde{A}_\ell}$ is the Laplacian matrix of $\tilde{A}_\ell$.
In the next theorem, we use the same notations as those defined in section~\ref{dist_measure}.
\begin{theorem}
If we use the Laplacian distance for the MDS, the embedding coordinate for any $t \in [0, \infty)$ is bounded as follows:
\begin{equation}
\left|\psi(B(t))\right|_2 <  \frac{1}{2}\left[\sqrt{M}\left(\frac{2}{1 - e^{-\alpha\tau}}\right)^2 \left(\sum_{m=1}^M\left\|\tilde{A}_m\right\|_1\right)^2 \sqrt{\sum_{\ell = 1}^k \frac{1}{\lambda_\ell}} + \left| L'_k\vec{\delta}_\mu\right|_2\right],
\label{eq:B(t)-boundedness-Laplacian-final}
\end{equation}
where $\left\|A\right\|_1 = \displaystyle\sum_{i=1}^N \sum_{j=1}^N \left| a_{ij} \right|$ for any $N\times N$ matrix $A = [a_{ij}]$.
\end{theorem}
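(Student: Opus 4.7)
The plan is to reuse the architecture of the Frobenius-distance proof essentially verbatim, substituting the Laplacian-based squared-distance vector for $\vec{\delta}_B(t)$ and supplying a single new ingredient: a uniform upper bound on $d_L(L(t), L(t_\ell))$ analogous to Eq.~\eqref{eq:Frob_measure_bound}. Once such a bound is in hand, the triangle-inequality decomposition in Eq.~\eqref{eq:bound_frob_coord_1}, the diagonal identity for $(L'_k)^\top L'_k$ in Eq.~\eqref{eq:matrix_Ltop_L}, and the reduction $\langle (L'_k)^\top L'_k \vec{\delta}_B(t), \vec{\delta}_B(t)\rangle_E = \sum_{\ell=1}^k d_L^4(L(t), L(t_\ell))/\lambda_\ell$ all carry over without change, and the final bound follows by taking square roots.

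To produce the uniform bound, I would proceed in three steps. First, I would invoke the Hoffman--Wielandt inequality for real symmetric matrices, which gives
\begin{equation*}
d_L(L(t), L(t_\ell)) = \sqrt{\sum_{k=1}^N \bigl(\rho_k(L(t)) - \rho_k(L(t_\ell))\bigr)^2} \le \|L(t) - L(t_\ell)\|\,.
\end{equation*}
Second, I would apply Lemma~\ref{lemma_bound} together with Eq.~\eqref{eq:linearity_tie_decay_laplacian} to write $L(t) - L(t_\ell) = \sum_{m=1}^M [\tilde{c}_m(t) - \tilde{c}_m(t_\ell)] L_{\tilde{A}_m}$ and then bound the Frobenius norm of this linear combination by the triangle inequality. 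Third, for each $m$ I would show that $\|L_{\tilde{A}_m}\| \le 2\|\tilde{A}_m\|_1$: writing $L_{\tilde{A}_m} = D^{\ast}_{\tilde{A}_m} - \tilde{A}_m$ and using that the $\ell^2$ norm of a vector in $\mathbb{R}^{N^2}$ is dominated by its $\ell^1$ norm, both $\|D^{\ast}_{\tilde{A}_m}\|$ and $\|\tilde{A}_m\|$ are bounded above by $\|\tilde{A}_m\|_1$. Combining these three steps with the coefficient estimate $|\tilde{c}_m(t) - \tilde{c}_m(t_\ell)| < 1/(1 - e^{-\alpha\tau})$, which follows immediately from Lemma~\ref{lemma:bound_coeff} since $0 \le \tilde{c}_m < 1/(1 - e^{-\alpha\tau})$, yields
\begin{equation*}
d_L(L(t), L(t_\ell)) < \frac{2\sum_{m=1}^M \|\tilde{A}_m\|_1}{1 - e^{-\alpha\tau}}\,.
\end{equation*}

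Substituting this estimate raised to the fourth power into the Frobenius-case bookkeeping produces Eq.~\eqref{eq:B(t)-boundedness-Laplacian-final}. The main subtlety is the very first step: unlike the Frobenius distance, the Laplacian distance is defined purely in terms of eigenvalues, and a naive estimate might depend on the individual eigenvectors of the $L_{\tilde{A}_m}$, which need not combine linearly. The Hoffman--Wielandt inequality is precisely the tool that converts the spectral distance $d_L$ into a matrix-norm distance, thereby allowing us to exploit the linearity of the Laplacian map afforded by Lemma~\ref{lemma_bound}. Apart from this single observation, every other step is mechanical and mirrors the Frobenius case line by line.
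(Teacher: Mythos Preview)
Your proposal is correct and reaches exactly the same uniform bound on $d_L(L(t),L(t_\ell))$ as the paper, but the route to that bound is genuinely different. The paper never invokes Hoffman--Wielandt; instead it uses the cruder decomposition $d_L \le \sum_j |\rho_j(L(t))| + \sum_j |\rho_j(L(t_\ell))|$, then observes that Laplacian eigenvalues are non-negative so that each sum equals the trace, and finally uses $\text{tr}(L_{\tilde{A}_m}) = \|\tilde{A}_m\|_1$ together with the linearity of the Laplacian map and the coefficient bound from Lemma~\ref{lemma:bound_coeff}. Your argument, by contrast, passes through $d_L \le \|L(t)-L(t_\ell)\|$ via Hoffman--Wielandt and then controls the Frobenius norm of the Laplacian itself by $\|L_{\tilde{A}_m}\| \le 2\|\tilde{A}_m\|_1$. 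The paper's approach is more elementary---it needs only the trace identity, not a spectral perturbation inequality---whereas yours parallels the Frobenius case more tightly and would generalize immediately to any other linear, symmetric-matrix-valued functional of $B(t)$. After the bound on $d_L$ is established, both proofs feed it into the identical bookkeeping of Eqs.~\eqref{eq:bound_frob_coord_1}--\eqref{eq:bound_frob_coord_2}, so the remainder is indeed line-for-line the same.
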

\begin{proof}
We obtain
\begin{align}
    d_L(L(t),L(t_\ell)) &= \sqrt{\sum_{j=1}^N\left[ \rho_j(L(t)) - \rho_j (L(t_\ell)) \right]^2}\notag\\
    &\le \sum_{j=1}^N\left|\rho_j(L(t)) - \rho_j (L(t_\ell)) \right|\notag\\
    &\leq \sum_{j=1}^N\left|\rho_j(L(t)) \right| + \sum_{j=1}^N\left| \rho_j(L(t_\ell)) \right|.
\label{eq:bounded-L1}
\end{align}
Because any eigenvalue of a Laplacian matrix is non-negative, we obtain
\begin{align}
\displaystyle\sum_{j=1}^N\left|\rho_j(L(t)) \right| &= \sum_{j=1}^N \rho_j(L(t)) \notag\\
    &= \text{tr}(L(t))\notag\\
    &= \tilde{c}_1(t)\text{tr}\left(L_{\tilde{A}_1}\right) + \cdots + \tilde{c}_M(t)\text{tr}\left(L_{\tilde{A}_M}\right).
\label{eq:bounded-L2}
\end{align}
Similarly, we obtain
\begin{equation}
    \sum_{j=1}^N \left| \rho_j(L(t_\ell)) \right| = \tilde{c}_1(t_\ell)\text{tr}\left(L_{\tilde{A}_1}\right) + \cdots + \tilde{c}_M(t_\ell)\text{tr}\left(L_{\tilde{A}_M}\right).
\label{eq:bounded-L3}
\end{equation}
By combining Eqs.~\eqref{eq:bounded-L1}, \eqref{eq:bounded-L2}, \eqref{eq:bounded-L3}, and $\text{tr}(L_{\tilde{A_\ell}}) = \displaystyle\sum_{i=1}^N\sum_{j=1}^N a^{(\ell)}_{ij}= \left\|\tilde{A_\ell}\right\|_1$, we obtain
\begin{align}
    d_L(L(t),L(t_\ell)) &\leq \sum_{m=1}^{M} \left[ \tilde{c}_m(t) + \tilde{c}_m(t_\ell)\right]\left\|\tilde{A}_m\right\|_1\notag\\
    & < \frac{2}{1 - e^{-\alpha\tau}}\sum_{m=1}^{M} \left\|\tilde{A}_m\right\|_1.
\end{align}
Similarly to the derivation of Eq.~\eqref{eq:L'delta_bound}, we obtain
\begin{align}
\left|L'_k\vec{\delta}_B(t)\right|_2^2 &\le \sum_{\ell=1}^k \frac{1}{\lambda_{\ell}}\sum_{\ell' = 1}^M d_L^4(B(t),B(t_{\ell'}))\notag\\
&< M\left(\frac{2 \sum_{m=1}^M\left\|\tilde{A}_m\right\|_1}{1 - e^{-\alpha\tau}}\right)^4 \sum_{\ell = 1}^k \frac{1}{\lambda_\ell}\,.
\label{eq:bound_laplacian_coord_1}
\end{align}
By substituting Eq.~\eqref{eq:bound_laplacian_coord_1} in Eq.~\eqref{eq:bound_frob_coord_1}, we obtain Eq.~\eqref{eq:B(t)-boundedness-Laplacian-final}.
\end{proof}

\section{Trajectories of empirical contact networks}\label{numerical_simulations}

In this section, we apply the proposed temporal network embedding method to two empirical data sets of time-stamped proximity events between human individuals. Both data sets are provided by the Sociopatterns project. The recording was made every 20 seconds. We use the Laplacian distance measure for computing the distance between networks. We set $\alpha=10^{-2}$. The results of the two data sets with $\alpha=10^{-3}$, shown in
Appendix~\ref{appendix_001}, are similar to those with $\alpha = 10^{-2}$, supporting the robustness of our method with respect to the $\alpha$ value.

\subsection{Primary school}\label{Primary_students_contact}

The Primary School data set is a temporal contact network recorded from students and teachers in a primary school in Lyon, France over two days, i.e., 10/1/2009 and 10/2/2009 \cite{gemmetto2014mitigation, stehle2011high}. The data spans from 8:45 AM to 5:20 PM on the first day and from 8:30 AM to 5:05 PM on the second day. We consider the time-stamped contact events between pairs of 242 individuals; there are 8,317 edges and 125,773 contacts. For each day, there are three breaks that allow students to interact with peers in other classes, which are the morning break for 20--25 minutes starting at 10:30 AM, the lunch break between 12 PM and 2 PM, and the afternoon break for 20--25 minutes starting at 3:30 PM.

We show in Figs.~\ref{fig:Primary_sch_trajectory_01}(a) and \ref{fig:Primary_sch_trajectory_01}(b) the trajectories in the two-dimensional embedding space for the first and the second day, respectively. We set $\alpha = 10^{-2}$. In each day, the tie-decay network is initially empty, which corresponds to $(x, y) \approx (-51.42, 14.93)$ in the embedding space. In both days, the trajectory resides in the first quadrant during the morning break (see the points labeled 10:30 AM in Fig.~\ref{fig:Primary_sch_trajectory_01}) and in the fourth quadrant during the lunch break (i.e., 12 PM to 2 PM). The segments of the trajectories during these time windows are relatively, while not completely, distinguishable from those for the rest of the day.

\begin{figure}[t]
    \centering
    \includegraphics[width=\textwidth]{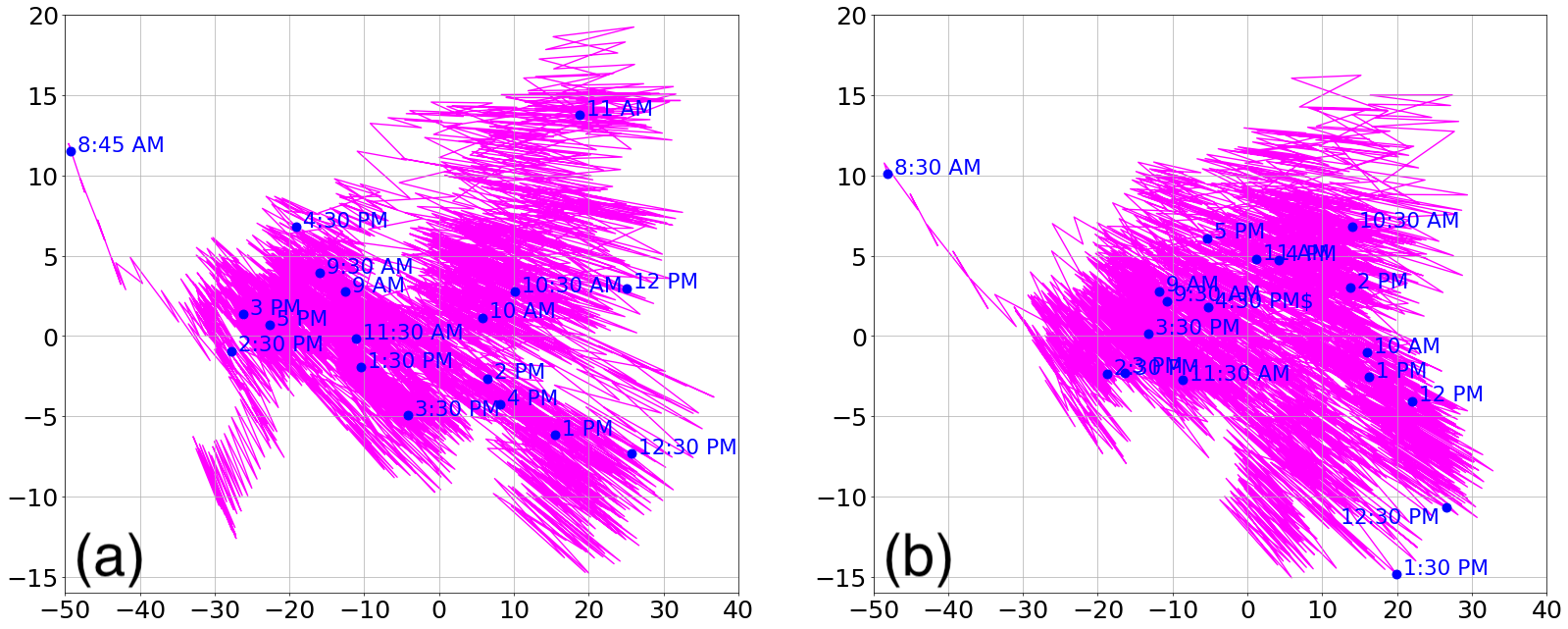}
    \caption{Trajectories of the Primary School temporal networks over two days in the two-dimensional embedding space. (a) Day 1. (b) Day 2. We set $\alpha = 10^{-2}$. The dots indicate the location of the trajectory every 30 minutes.}
    \label{fig:Primary_sch_trajectory_01}
\end{figure}

To compare between the two days, we calculate the average coordinate of the trajectory in the embedding space in each of the consecutive 30-minute windows. We show the average coordinate for the two days in Fig.~\ref{fig:Primary_sch_avg_01}(a). The point labeled 10 AM, for example, represents the $(x, y)$ coordinate averaged between 10 AM and 10:30 AM. Figure~\ref{fig:Primary_sch_avg_01}(a) suggests that the two networks at the same point of time on the two days tend to be located close to each other in the embedding space. For example, the networks at 10:30 AM, corresponding to the morning break, on the two days are both located at $(x, y)$ with large $x$ and $y$ values. As another example, the networks during lunch breaks (i.e., 12 PM to 2 PM) tend to linger around $(x, y)$ with $-10 \le x \le 20$ and $-3 \le y \le 0$.
To quantitatively examine the day-to-day consistency of the network trajectory, we compute the distance between the 30-minute average of the network at the same time between the two days, which we show in Fig.~\ref{fig:Primary_sch_avg_01}(b). First, we observe that the networks during the lunch and afternoon breaks are in particular similar between the two days compared to other times of the daytimes. Second, the distance between the two networks at the same time of the day tends to decrease over the course of the day. Third, the coordinate at the same time of the day is in fact close between the two days. Specifically, the average distance between an arbitrarily chosen pair of times indicated by the dots in Fig.~\ref{fig:Primary_sch_avg_01}(a) is equal to
$16.87$, which is substantially larger than the distance between the two networks at the same time of the day shown in Fig.~\ref{fig:Primary_sch_avg_01}(b), except at 8:30 AM.

\begin{figure}[t]
    \centering
    \includegraphics[width=\textwidth]{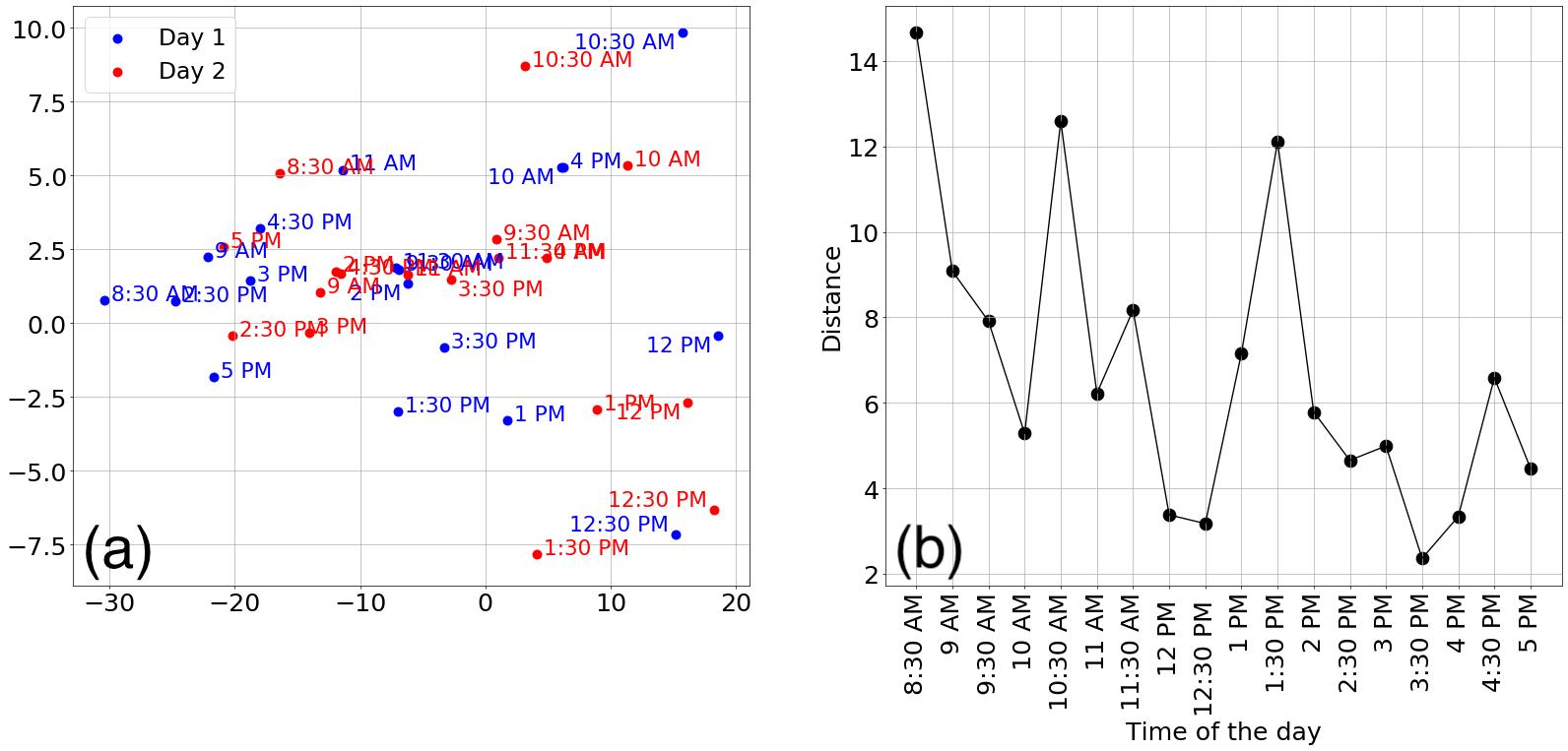}
    \caption{Comparison of the trajectory of the Primary School temporal network between the two days. (a) Average coordinate of the tie-decay network in the embedding space over every 30-minute window. (b) Day-to-day distance between the networks at the same time of the day. We set $\alpha = 10^{-2}$.}
\label{fig:Primary_sch_avg_01}
\end{figure}

To justify our choice of the dimension of the embedding space (i.e., $k = 2$) and the distance measure (i.e., Laplacian distance measure), we carry out the following analysis. We show in Table~\ref{table:goodness_of_fit_01} the values of $g$ and $\sigma$ for $k =1$, $2$, and $3$ for the two distance measures. We find that the Laplacian distance measure yields much larger $g$ values and much smaller $\sigma$ values than 
the Frobenius distance measure does. This result justifies the use of the Laplacian distance. Moreover,  Table~\ref{table:goodness_of_fit_01} shows that, with the Laplacian distance, the embedding dimension of $k=2$ is enough, but not $k=1$.
Note that the $\sigma$ value with $k=2$ is equal to $0.06$, which is smaller than a suggested criterion value of 0.15 \cite{borg2005modern}. With $k=3$, the goodness of embedding further improves compared with the case of $k=2$. However, the gain with $k=3$ is not large because the embedding is already good with $k=2$. Therefore, we conclude that the combination of the Laplacian distance measure and $k=2$ is a reasonable choice for this data set.

\subsection{Hospital}\label{hospital_ward}

The second data set is time-stamped contact network collected from a hospital ward in Lyon, France~\cite{vanhems2013estimating}. We refer to this data set as Hospital. The network is composed of 75 nodes, of which 11 nodes represent medical doctors, 35 nurses, and 29 patients. The network has $1,139$ edges and $32,424$ contacts. The data span the weekdays of a week, from 1 PM on Monday, 12/6/2010 to 2 PM on Friday, 12/10/2010.

To compare the network trajectory across days, we show in Fig.~\ref{fig:Hosp_avg_01}(a) the hourly average of the embedding coordinate. For example, the coordinate labeled 10 AM represents the coordinate of the trajectory averaged between 10 AM and 11 AM. Figure~\ref{fig:Hosp_avg_01}(a) apparently indicates that, similar to the case of the Primary School data, the network trajectory for the different days tends to visit a similar location at the same time of the day. To quantitatively investigate this property, we first compute the distance between the coordinate at a certain time and day and the coordinate at the same time of a different day. Then, given the time of the day, we average the obtained distance value over all possible pairs of the day of the week. For example, There are four tie-decay networks at 10 AM, corresponding to Tuesday, Wednesday, Thursday, and Friday. Therefore, we average the pairwise distance over the 6 possible pairs of days. We show the average distance at each time of the day in Fig.~\ref{fig:Hosp_avg_01}(b). The figure suggests that the tie-decay networks at the same time of the day are apparently close to each other across the days in the morning, but less so in the afternoon, which is opposite to the case of the Primary School data (see Fig.~\ref{fig:Primary_sch_avg_01}(b)). We note that all the day-to-day distance values shown in Fig.~\ref{fig:Hosp_avg_01}(b) are substantially less than 7.61, which is the average distance between an arbitrary pair of hours shown in Fig.~\ref{fig:Hosp_avg_01}(a). Therefore, we conclude that the tie-decay network in different days recurs to a similar network at the same time of the day, similar to the case of the Primary School data.
\begin{figure}[t]
    \centering
        \includegraphics[width=\textwidth]{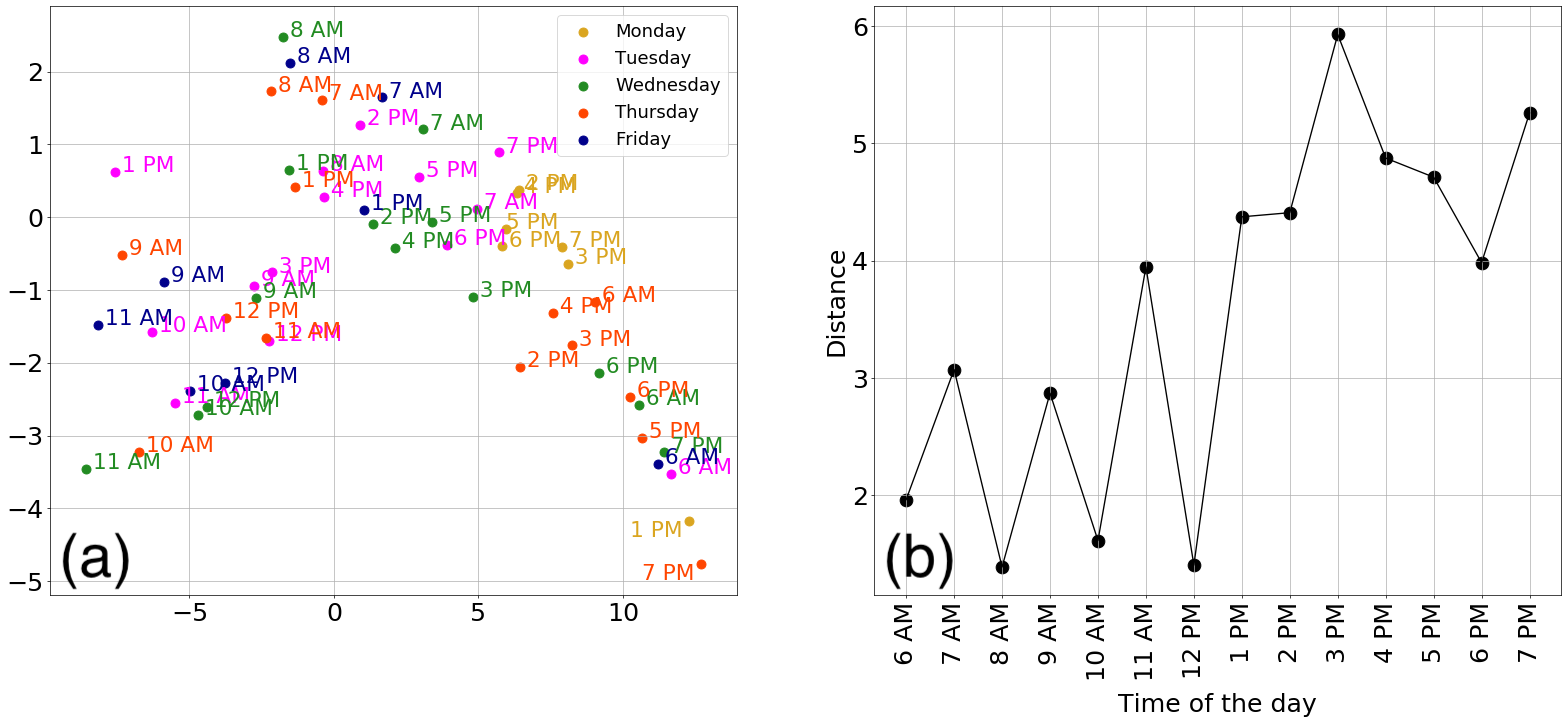}
\caption{Comparison of the trajectory of the Hospital temporal network across the five days. (a) Average coordinate of the tie-decay network in the embedding space over every hourly time window. (b) Day-to-day distance between the networks at the same time of the day, which is averaged over all pairs of days. We omitted the tie-decay networks before 6 AM and after 8 PM because most of the contacts (i.e., 95.6\%) occur between 6 AM and 8 PM. We set $\alpha = 10^{-2}$.}
\label{fig:Hosp_avg_01}
\end{figure}

The previous study showed that contacts among nurses and patients are more abundant in the morning than the afternoon and that those among medical doctors are similar in frequency between the morning and the afternoon~\cite{vanhems2013estimating}.
We verify this observation in Fig.~\ref{fig:Hosp_avg_comparison}(a), which shows the normalized number of contacts of different types in each hour. For normalization, we first divided the number of contacts of each type in each hourly window by the number of the days available. For example, data between 8 AM and 9AM was recorded from Tuesday to Friday. Therefore, we divided the number of contacts between 8 AM and 9 AM by four. Second,
we further normalized the obtained count so that the average over the hours of the day is equal to 1, separately for each type of contact. Similar to
the previous study~\cite{vanhems2013estimating}, we consider three types of contacts, i.e., contacts between medical doctors, labeled MD-MD, those between a medical doctor and a nondoctor, i.e., nurse or patient, which we label MD-nonMD, and those between individuals that are not medical doctors, labeled nonMD-nonMD. Figure~\ref{fig:Hosp_avg_comparison}(a) indicates that the number of MD-MD contacts is relatively stable over the course of the day. In contrast, the MD-nonMD and nonMD-nonMD contacts are roughly more abundant in the morning than the afternoon.

Because the entire network trajectory is more diverse and the MD-MD contacts are relatively more frequent in the afternoon than in the morning, we hypothesize that the temporal subnetwork composed only of the MD-MD contacts varies more considerably across the days of the week than the MD-nonMD and nonMD-nonMD subnetworks. To examine this hypothesis, we analyze the trajectory of the MD-MD, MD-nonMD, and nonMD-nonMD temporal subnetworks. The union of the sets of contacts from the three subnetworks is the set of all contacts in the original data set.
For each subnetwork, we embed the corresponding tie-decay network and compute its hourly average coordinate in the embedding space. Then, for each subnetwork, similar to the calculation for Fig.~\ref{fig:Hosp_avg_01}(b), we compute the average distance between pairs of coordinates at the same time of the day. Finally, we divide this number by the average distance over all the possible pairs of hours. 
If this normalized distance is smaller than $1$, then the trajectory of the subnetwork tends to visit similar locations at the same time of the day across the days.

\begin{figure}[t]
    \centering
    \includegraphics[width=\textwidth]{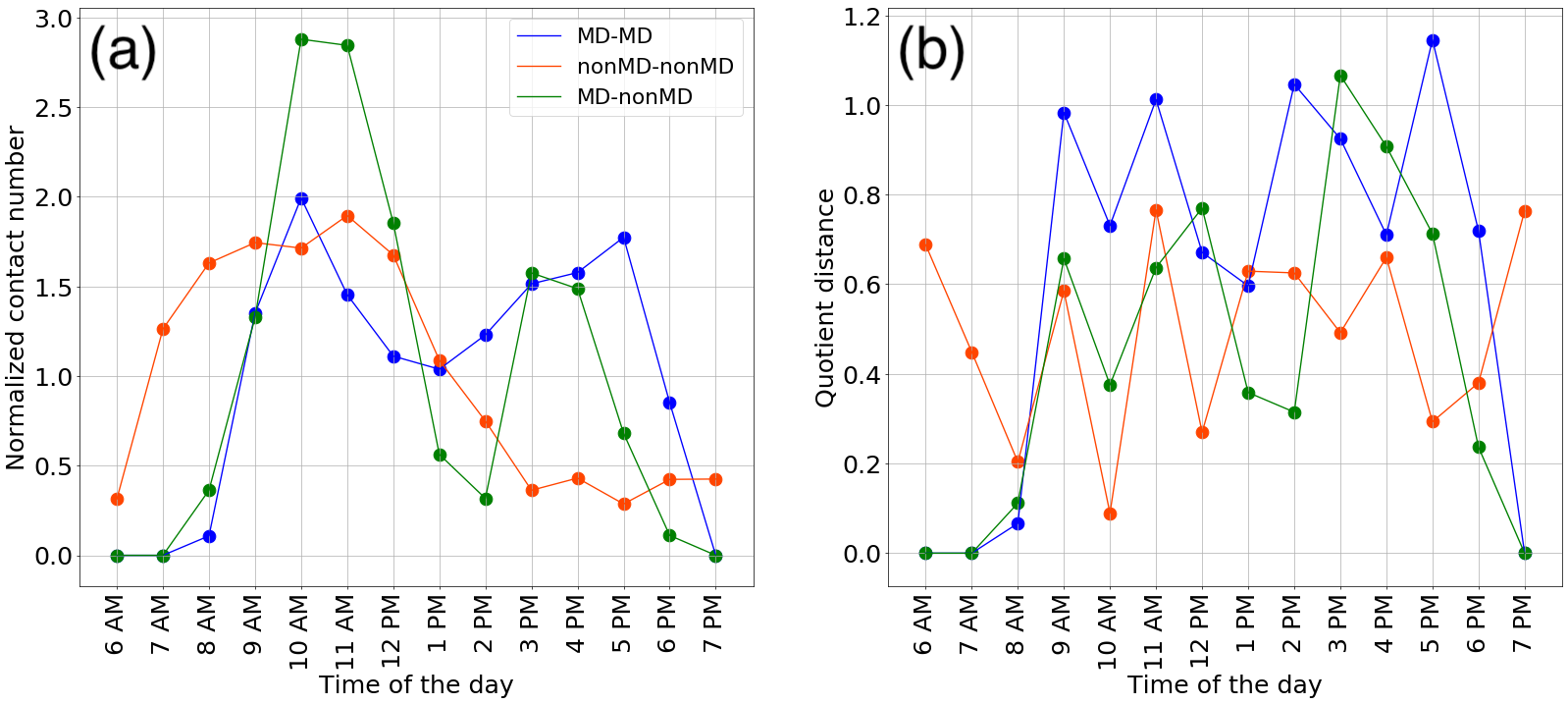}
  \caption{Comparison among temporal subnetworks in the Hospital data. (a) Normalized number of contacts over time. (b) Normalized day-to-day distance at the same time of the day for the different subnetworks. In both (a) and (b), we separately analyzed the trajectories of MD-MD, MD-nonMD, and nonMD-nonMD subnetworks.
}
\label{fig:Hosp_avg_comparison}
\end{figure}

We show the normalized distance for the three subnetworks from 6 AM to 8 PM in Fig.~\ref{fig:Hosp_avg_comparison}(b).
We find that the day-to-day difference of the MD-MD subnetwork at the same time of the day is not particularly small, with the normalized day-to-day distance value being close to 1 across the different hours of the day (shown in blue in Fig.~\ref{fig:Hosp_avg_comparison}(b)).
The day-to-day difference of the MD-nonMD and nonMD-nonMD subnetworks at the same time of the day is smaller than that of the MD-MD subnetwork in most times of the day. In particular, the normalized day-to-day distance for the nonMD-nonMD subnetwork is subsantially smaller than 1 throughout the day. 

Figure~\ref{fig:Hosp_avg_comparison}(a) implies that the fraction of the MD-MD contacts among all contacts is higher in the afternoon than the morning.
Figure~\ref{fig:Hosp_avg_comparison}(b) indicates that the medical doctors form more dissimilar networks across the days than nurses and patients. These two phenomena in combination explain why the day-to-day variation in the entire tie-decay network at the same time of the day is larger in the afternoon than the morning.

We end this section by justifying the choice of the distance measure (i.e., Laplacian distance) and embedding dimension (i.e., $k = 2$). We show
the $g$ and $\sigma$ values for the two distance measures and three values of $k$ in Table~\ref{table:goodness_of_fit_01}. Similar to the case of the Primary School data, we find that the Laplacian distance is much superior to the Frobenius distance,
that $k=2$ is necessary, and that the accuracy of the embedding does not improve much with $k=3$ compared to $k=2$.
The combination of the Laplacian distance and $k=2$ is also satisfactory for embedding the three subnetworks, as we show in  Table~\ref{table:goodness_of_fit_subnetwork_01}.

\begin{table}[t]
\begin{center}
\captionof{table}{Goodness of temporal network embedding. We set $\alpha = 10^{-2}$.\label{table:goodness_of_fit_01}}
\begin{tabular}{|c|c|c|c|c|c|}
 \hline
Data & Distance measure & Index & $k = 1$ & $k = 2$ & $k = 3$\\
\hline
\multirow{4}{*}{Primary School}
& \multirow{2}{*}{Frobenius} & $g$  & 7.73\%  & 13.70\%  & 17.39\% \\
\cline{3-6}
&   & $\sigma$  & 0.81  & 0.76 & 0.73 \\
\cline{2-6}
& \multirow{2}{*}{Laplacian}& $g$ & 84.76\%  & 95.16\%  & 97.13\% \\
\cline{3-6}
& & $\sigma$ & 0.18  & 0.06 & 0.03 \\
\cline{2-6}
 \hline
 \multirow{4}{*}{ Hospital}
 & \multirow{2}{*}{Frobenius} & $g$ & 6.03\%  & 10.51\%  & 14.52\%\\
 \cline{3-6}
& & $\sigma$ & 0.84 & 0.74 & 0.68 \\
\cline{2-6}
&  \multirow{2}{*}{Laplacian} & $g$ & 78.84\% & 90.69\% & 95.13\% \\
\cline{3-6}
& & $\sigma$ & 0.22 & 0.09  & 0.05 \\
\cline{2-6}
 \hline
\end{tabular}
\end{center}
\end{table}

\begin{table}[t]
\begin{center}
\captionof{table}{Goodness of temporal network embedding for the subnetworks of the Hospital's temporal network. We set $\alpha = 10^{-2}$. We use the Laplacian network distance.\label{table:goodness_of_fit_subnetwork_01}}
\begin{tabular}{|c|c|c|}
 \hline
Subnetwork & Index & $k = 2$ \\
\hline
\multirow{2}{*}{MD-MD} & $g$ & 93.23\%  \\
 \cline{2-3}
& $\sigma$ & 0.07 \\
\cline{1-3}
 \multirow{2}{*}{nonMD-nonMD} & $g$ & 91.04\% \\
\cline{2-3}
& $\sigma$ & 0.09  \\
\cline{1-3}
\multirow{2}{*}{MD-nonMD} & $g$ & 91.99\% \\
 \cline{2-3}
& $\sigma$ & 0.08 \\
\cline{1-3}
 \hline
\end{tabular}
\end{center}
\end{table}

\section{Graph classification}

To demonstrate the utility of our embedding method, we apply the Support Vector Machine (SVM) to the embedding coordinates of the Primary School data to carry out a graph classification task \cite{kazemi2020representation, taheri2019learning}. We define four groups of the time of the school day depending on the activities in the school: morning break, lunch break, afternoon break, and class time. We apply the SVM to the set of the embedding coordinates of tie-decay networks averaged over each 30-minute time window (see Fig.~\ref{fig:Primary_sch_avg_01}) to estimate the label of the group to which each 30-minute network has. There are 36 such averaged coordinates, 18 from each of the two days. Then, we compare the classification results to those obtained by applying the SVM to the set of embedding coordinates calculated with baseline methods, i.e., principal component analysis (PCA), t-distributed stochastic neighbor embedding (t-SNE) \cite{van2008visualizing}, and MDS. In the PCA and t-SNE, we compute an adjacency matrix by aggregating the events in each 30-minute time window. We then vectorize the upper-diagonal part of the adjacency matrices into $N(N-1)/2$-dimensional vectors and apply the PCA or t-SNE, obtaining the embedding coordinates in a two-dimensional vector space. In the MDS, we use the same set of aggregated adjacency matrices to compute the distance between each pair of the aggregated adjacency matrices, on which we run the MDS to generate a two-dimensional representation of the adjacency matrices. 
To learn the SVM, we first randomly choose 80\% of the coordinates in the embedding space as the training set. The remaining 20\% of the coordinates are the test set.
Then, we measure the accuracy of the classifier as the number of the correctly classified 30-minute networks divided by the total number of the 30-minute networks in the test set. After repeating this procedure $10^4$ times, we compute the average and the standard deviation of the accuracy.

We show the accuracy values for the tie-decay networks with $\alpha=10^{-2}$ and $10^{-3}$, PCA, t-SNE, and MDS in Table~\ref{table:classification_accuracy}. The mean accuracy for the different classifiers is between $0.8$ and $0.9$ except for the t-SNE, for which the mean accuracy is substantially lower (i.e., $0.65$) than for the other classifiers. The accuracy for the tie-decay networks with $\alpha=10^{-3}$ (i.e., $0.81$) is also notably lower than that for the tie-decay networks with $\alpha=10^{-2}$, PCA, and MDS. 
These results remain similar when the embedding dimension is three, as we show in Appendix~\ref{appendix_3d_classification}.
Then, for the embedding dimension equal to two, we ran a repeated measures ANOVA to test the null hypothesis that the mean accuracy is not different among the remaining three embedding methods, i.e., tie-decay networks with $\alpha = 10^{-2}$, PCA, and MDS. The critical value of the F-statistics corresponding to the significant level of $p=0.05$ is approximately 1.03. Our test yielded a F-statistics value of 0.99, which is less than 1.03. Therefore, we accept the null hypothesis to conclude that our method based on the tie-decay networks with $\alpha = 10^{-2}$ performs on par with PCA and MDS in this graph classification task.

\begin{table}[t]
\begin{center}
\captionof{table}{Mean and standard deviation of accuracy of the different embedding methods in the graph classification task. We set the embedding dimension to two. We calculated the mean and standard deviation on the basis of $10^4$ iterations of the graph classification task.\label{table:classification_accuracy}}
\begin{tabular}{|c|c|c|}
 \hline
Method & Mean & Standard deviation \\
\hline
\multirow{1}{*}{Tie-decay with $\alpha = 10^{-2}$} & $0.88$ & 0.10 \\
\cline{1-3}
\multirow{1}{*}{Tie-decay with $\alpha = 10^{-3}$} & $0.81$ & 0.12 \\
\cline{1-3}
\multirow{1}{*}{PCA} & $0.85$ & 0.12 \\
\cline{1-3}
\multirow{1}{*}{t-SNE} & $0.65$ & 0.15 \\
\cline{1-3}
\multirow{1}{*}{MDS} & $0.89$ & 0.10 \\
\cline{1-3}
 \hline
\end{tabular}
\end{center}
\end{table}


\section{Discussion}

In the present study, we have developed a method to embed temporal network data into an embedding space as a continuous trajectory. 
To this end, we combined the framework of tie-decay networks and the LMDS. We then proved some mathematical properties of the proposed embedding method and applied the method to two empirical data sets of temporal social contact networks among human individuals. For both data sets, we find that the networks at the same time of the day are relatively similar across the different days compared to those at different times of the day. In addition, for the Primary School data set, the coordinates of the morning break and the lunch break in the embedding space are in particular distinguishable from the rest of the day. For the Hospital data set, we have found that the day-to-day variation of the network at the same time of the day is larger in the afternoon than the morning and that this owes to the networking behavior of the medical doctors rather than the nurses and patients.

The network embedding problem that we have addressed in this study is distinct from the more popular node embedding, which aims at mapping individual nodes into a low-dimensional space such that their pairwise similarity is preserved as much as possible. Examples of node embedding include DeepWalk \cite{perozzi2014deepwalk}, node2vec \cite{grover2016node2vec}, and LINE \cite{tang2015line}. Node embedding methods for temporal networks are also available \cite{liben2007link, lu2011link, hasan2011survey}. See \cite{zhang2018network, goyal2018graph} for reviews of node embedding for static and temporal networks.
With node embedding for temporal networks, each node moves in the embedding space as the input network varies over time. 
In contrast, we have proposed to embed the entire network structure at any given time into a single point in the embedding space. Thus, the sequence of networks observed over time generates a continuous-time trajectory lying in the embedding space.
Node embedding for temporal networks and the embedding of a temporal network into a trajectory in the embedding space have different goals. With the latter, we drastically coarse grain the original temporal network by representing the snapshot network at each time point, $B(t)$, by a single point in the embedding space.
In this manner, we are not focusing on the individual nodes and edges, or their relationships. Rather, we aim at capturing both structural and dynamical features of the evolving networks.
%
%
By construction, the distance between two points on the trajectory reflects how different the networks at two times are. This property is potentially useful for downstream tasks such as identifying macroscopic states and dynamics of the system state of temporal networks \cite{masuda2019detecting, sugishita2021recurrence}, change-point analysis \cite{chandola2009anomaly, akoglu2015graph, peixoto2018change}, anomaly detection \cite{ranshous2015anomaly}, systematic identification of recurrence using recurrence quantification analysis \cite{lopes2021recurrence, sugishita2021recurrence}, and link prediction \cite{liben2007link,trouillon2016complex}.

The LMDS is one among many out-of-sample extension methods for dimension reduction, and it can be regarded as an application of the Nystr\"{o}m approximation of the eigenvalues and eigenvectors of matrices \cite{Platt2005Aistats}. The Nystr\"{o}m approximation is useful for our temporal network embedding framework because, with a suitably chosen network distance measure, we can efficiently approximate the elements of the distance matrix that are unobserved. We specifically exploited the exponentially decaying nature of $B(t)$ in the absence of input to calculate the distance between pairs of networks at arbitrary continuous times in a manner compatible with the Nystr\"{o}m approximation. In fact, the LMDS is not the only Nystr\"{o}m-type out-of-sample extension of the MDS, and MDS is not the only dimension reduction method that allows the Nystr\"{o}m algorithm \cite{Platt2005Aistats}. Therefore, our method should also be generalizable to other Nystr\"{o}m-type out-of-sample extensions. Examples include the landmark Isomap algorithm
 \cite{Bengio2004Nips,Desilva2003Nips} (see \cite{Talwalkar2008IeeeConfComputVisPatRecog} for the Nystr\"{o}m algorithm for the Isomap) and the landmark diffusion map based on the Nystr\"{o}m approximation \cite{Long2019ApplComputHarmAnal}. 

Our temporal network embedding method has arbitrary parameters and allows methodological choices. The tie-decay rate, $\alpha$, is defined by the user. The embedding results were qualitatively the same between $\alpha = 10^{-2}$ and $\alpha = 10^{-3}$ for the two empirical temporal networks used in this study. If one uses a large $\alpha$  value, the tie-decay network decays fast such that the network trajectory would quickly approach and linger near the coordinate to which the null network is mapped. In contrast, if $\alpha$ is small, the tie-decay network at time $t$ reflects the input events that are far in the past. Then, the network trajectory may change too slowly and be insensitive to new input events, potentially compromising the interpretation of the results and performance of downstream tasks. Automatic determination of the $\alpha$ value warrants future work.
The choice of the embedding dimension, $k$, and the network distance measure plays an important role.
To this end, we evaluated two indices to assess the embedding quality, concluding that the combination of $k=2$ and the Laplacian network distance is our recommendation. Generalizing this result to a larger set of goodness of fit indices \cite{chennuru2018measures}, network distance measures \cite{donnat2018tracking, masuda2020guide, hartle2020network, wills2020metrics}, and various data of time-stamped contact events also warrants future work.

Although we focused on contact network data, one can apply the proposed framework and methods of temporal network embedding to various temporal network data. For example, dynamic functional connectivity \cite{hutchison2013dynamic, allen2012tracking} and
chronnectome \cite{calhoun2014chronnectome} aim to construct and interpret temporal networks from time-dependent correlational measurements of brain activity.
Our temporal network embedding method and its possible extensions are expected to give new insights into such functional network data in neuroscience. In particular, the location and dynamics of the network trajectory in the embedding space may give us a compact summary of brain dynamics without the need of explicitly
defining and detecting discrete dynamical states of the brain network or their transient dynamics \cite{baker2014fast, ryali2016temporal, VIDAURRE2018646}.
State-transition dynamics has also been applied to data of cell cycle dynamics \cite{lucas2021inferring}.
It may be interesting to apply our methods to these types of data to characterize periodicity and fluctuations around periodic dynamics in cell cycles. 

\section*{Acknowledgements}
The work of Lorenzo Livi was supported by the Canada Research Chairs Program. The work of Naoki Masuda was supported in part by the Air Force Office of Scientific Research (AFOSR) European Office under Grant FA9550-19-1-7024, in part by the National Science Foundation or NSF under Grant DMS-2052720, in part by the Sumitomo Foundation, in part by JSPS KAKENHI under Grant JP 21H04595, and in part by the Japan Science and Technology Agency (JST) under Grant JPMJMS2021.
We also thank SocioPatterns organization for providing the primary school and hospital ward data sets.


\begin{appendices}
\section{Results for $\alpha = 10^{-3}$}
\label{appendix_001}

\begin{table}[t]
\begin{center}
\captionof{table}{Goodness of temporal network embedding. We set $\alpha = 10^{-3}$.\label{table:goodness_of_fit_001}}
\begin{tabular}{|c|c|c|c|c|c|}
 \hline
Data & Distance measure & Index & $k = 1$ & $k = 2$ & $k = 3$\\
\hline
\multirow{4}{*}{Primary School}
& \multirow{2}{*}{Frobenius} & $g$  & 18.59\%  & 33.48\%  & 40.63\% \\
\cline{3-6}
&   & $\sigma$  & 0.69  & 0.60 & 0.56 \\
\cline{2-6}
& \multirow{2}{*}{Laplacian}& $g$ & 82.95\%  & 96.16\%  & 97.77\% \\
\cline{3-6}
& & $\sigma$ & 0.18  & 0.05 & 0.03 \\
\cline{2-6}
 \hline
 \multirow{4}{*}{ Hospital}
 & \multirow{2}{*}{Frobenius} & $g$ & 11.86\%  & 20.70\%  & 28.16\%\\
 \cline{3-6}
& & $\sigma$ & 0.77 & 0.64 & 0.54 \\
\cline{2-6}
&  \multirow{2}{*}{Laplacian} & $g$ & 81.33\% & 92.79\% & 96.70\% \\
\cline{3-6}
& & $\sigma$ & 0.21 & 0.08  & 0.04 \\
\cline{2-6}
 \hline
\end{tabular}
\end{center}
\end{table}

\begin{table}[t]
\begin{center}
\captionof{table}{Goodness of temporal network embedding for the subnetworks of the Hospital's temporal network with $\alpha = 10^{-3}$. We use the Laplacian network distance.\label{table:goodness_of_fit_subnetwork_001}}
\begin{tabular}{|c|c|c|}
 \hline
Subnetwork & Index & $k = 2$ \\
\hline
 \multirow{2}{*}{MD-MD} & $g$ & 94.45\%  \\
 \cline{2-3}
& $\sigma$ & 0.07 \\
\cline{1-3}
 \multirow{2}{*}{nonMD-nonMD} & $g$ & 93.85\% \\
\cline{2-3}
& $\sigma$ & 0.08  \\
\cline{1-3}
\multirow{2}{*}{MD-nonMD} & $g$ & 91.81\% \\
 \cline{2-3}
& $\sigma$ & 0.09 \\
\cline{1-3}
 \hline
\end{tabular}
\end{center}
\end{table}
To assess the robustness of network trajectories in terms of the $\alpha$ value, we repeated the same numerical experiments with $\alpha = 10^{-3}$. Table~\ref{table:goodness_of_fit_001} shows the values of $g$ and $\sigma$ for the Primary School and Hospital data sets, with the two network distance measures and embedding dimensions $k=1,2$, and $3$. The results are qualitatively the same as those with $\alpha=10^{-2}$. Specifically, the Laplacian distance is by far superior to the Frobenius distance, and $k=2$ is necessary and sufficient in practice.
The values of these indices for the subnetworks of the Hospital temporal network, shown in Table~\ref{table:goodness_of_fit_subnetwork_001}, indicate that
the combination of the Laplacian measure and $k = 2$ is also sufficiently good for these subnetworks.
Therefore, in the rest of this section, we use the Laplacian distance and $k=2$.

We show in Fig.~\ref{fig:Primary_sch_trajectory_001} the trajectory of the tie-decay network constructed from the Primary School data in the embedding space, with each panel representing each day. As expected, the trajectory is more slowly changing than in the case of $\alpha=10^{-2}$ (see Fig.~\ref{fig:Primary_sch_trajectory_01}) because a smaller $\alpha$ value used here (i.e., $\alpha = 10^{-3}$) implies a stronger memory effect.
In both days, the trajectory resides in the third quadrant during the morning break (i.e., 10:30 AM) and, except for 12 PM, in the second quadrant during the lunch break (i.e., from 12.30 PM to 2 PM). In addition, the location of the segments of the trajectory in these time windows is relatively distinguishable from the rest of the day. These results are similar to those with $\alpha = 10^{-2}$ (see Fig.~\ref{fig:Primary_sch_trajectory_01}).
\begin{figure}[t]
    \centering
    \includegraphics[width=\textwidth]{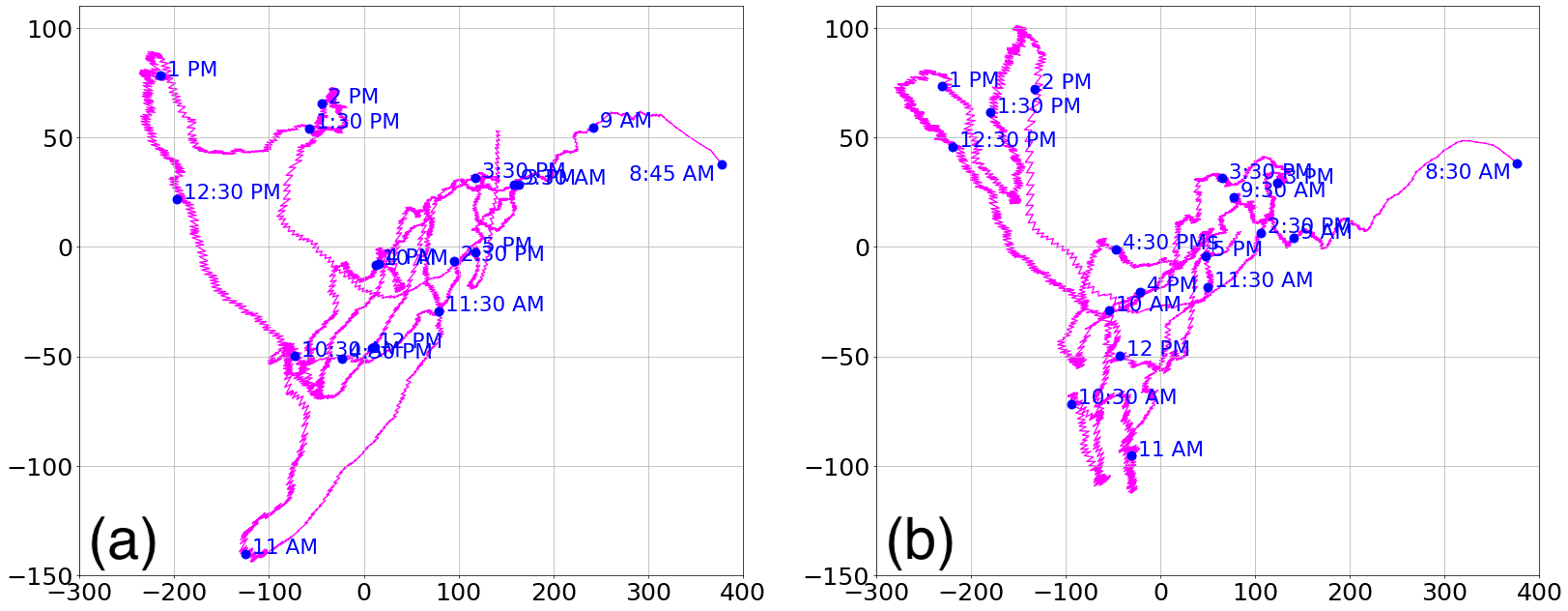}
    \caption{Trajectory of the tie-decay network for the Primary School data. (a) Day 1. (b) Day 2. We set $\alpha = 10^{-3}$. The dots represent the location of the trajectory every 30 minutes.
    }\label{fig:Primary_sch_trajectory_001}
\end{figure}
 We show in Fig.~\ref{fig:Primary_sch_avg_001}(a) the 30-minute average of the embedding coordinate over the two days. The figure suggests that the coordinate at the same time of the day tends to be close between the two days. We show in Fig.~\ref{fig:Primary_sch_avg_001}(b) the distance between the time-averaged tie-decay networks at the same time of the two days in the embedding space. We find that, at any time of the day, the distance is substantially smaller than $156.07$, which is the average distance between all the pairs of 30-minute windows. These results are similar to those with $\alpha=10^{-2}$ (see Fig.~\ref{fig:Primary_sch_avg_01}).
\begin{figure}[t]
    \centering
     \includegraphics[width=\textwidth]{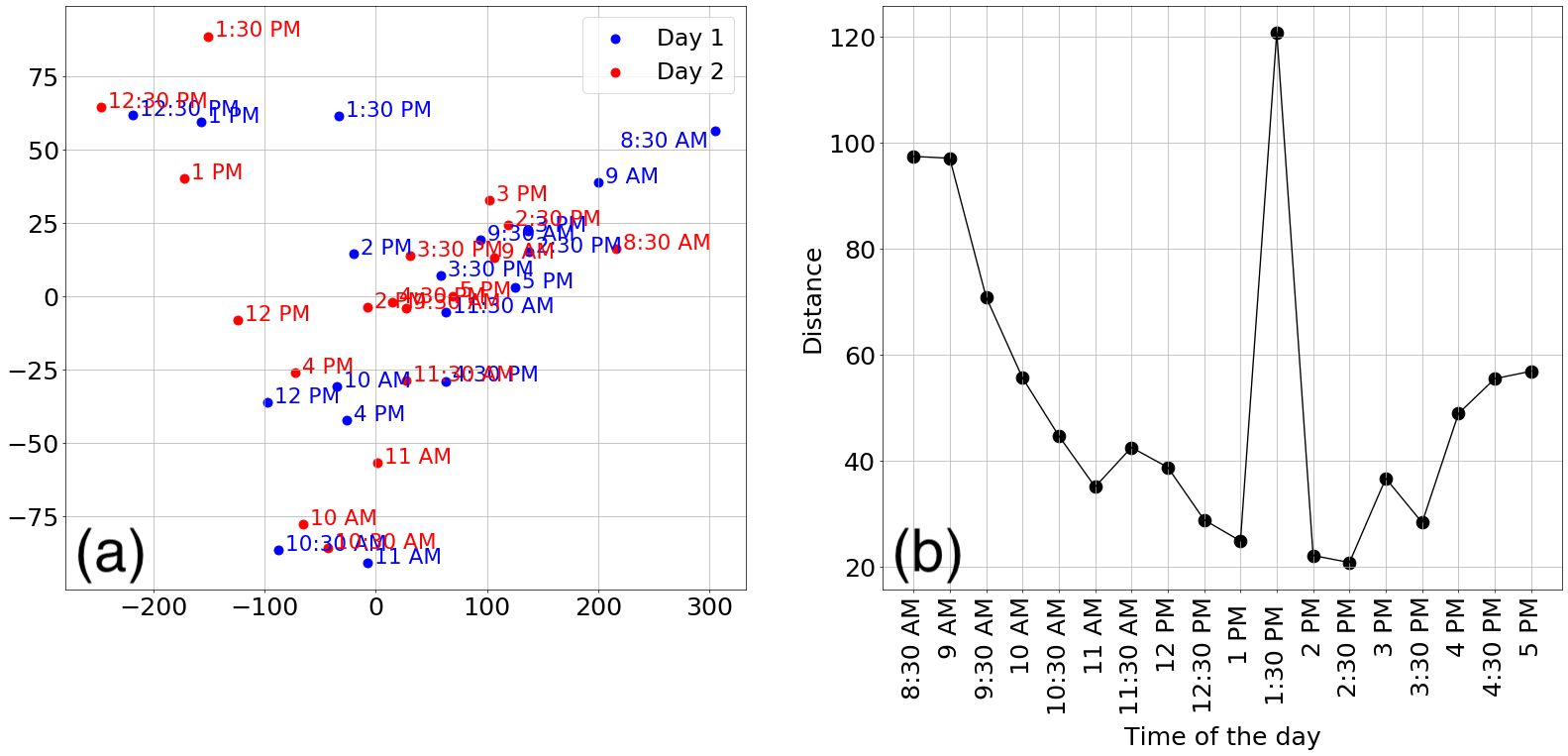}
\caption{Comparison of the trajectory of the Primary School temporal network between the two days, with $\alpha = 10^{-3}$. (a) Average coordinate of the tie-decay network in the embedding space over every 30-minute time window. (b) Day-to-day distance between the networks at the same time of the day.}\label{fig:Primary_sch_avg_001}
\end{figure}

We carried out the same analysis for the Hospital temporal network, with the window size for averaging being an hour.
The results shown in Fig.~\ref{fig:Hosp_avg_001} are similar to those with $\alpha = 10^{-2}$ (see Fig.~\ref{fig:Hosp_avg_01}).
In particular, the average day-to-day distance between the tie-decay networks at the same time of the day, shown in Fig.~\ref{fig:Hosp_avg_001}(b), is substantially smaller than the average distance between an arbitrary pair of networks (i.e., $61.01$), and this tendency is stronger in the morning than in the afternoon. The variability of the day-to-day distance over the course of the day for each subnetwork of the Hospital temporal network is shown in Fig.~\ref{fig:Hosp_avg_comparison_001}. The results are similar to those with $\alpha = 10^{-2}$ (see Fig.~\ref{fig:Hosp_avg_comparison}).

We conclude that our temporal network embedding method yields qualitatively similar trajectories between $\alpha = 10^{-3}$ and $\alpha=10^{-2}$.
\begin{figure}[t]
    \centering
     \includegraphics[width=\textwidth]{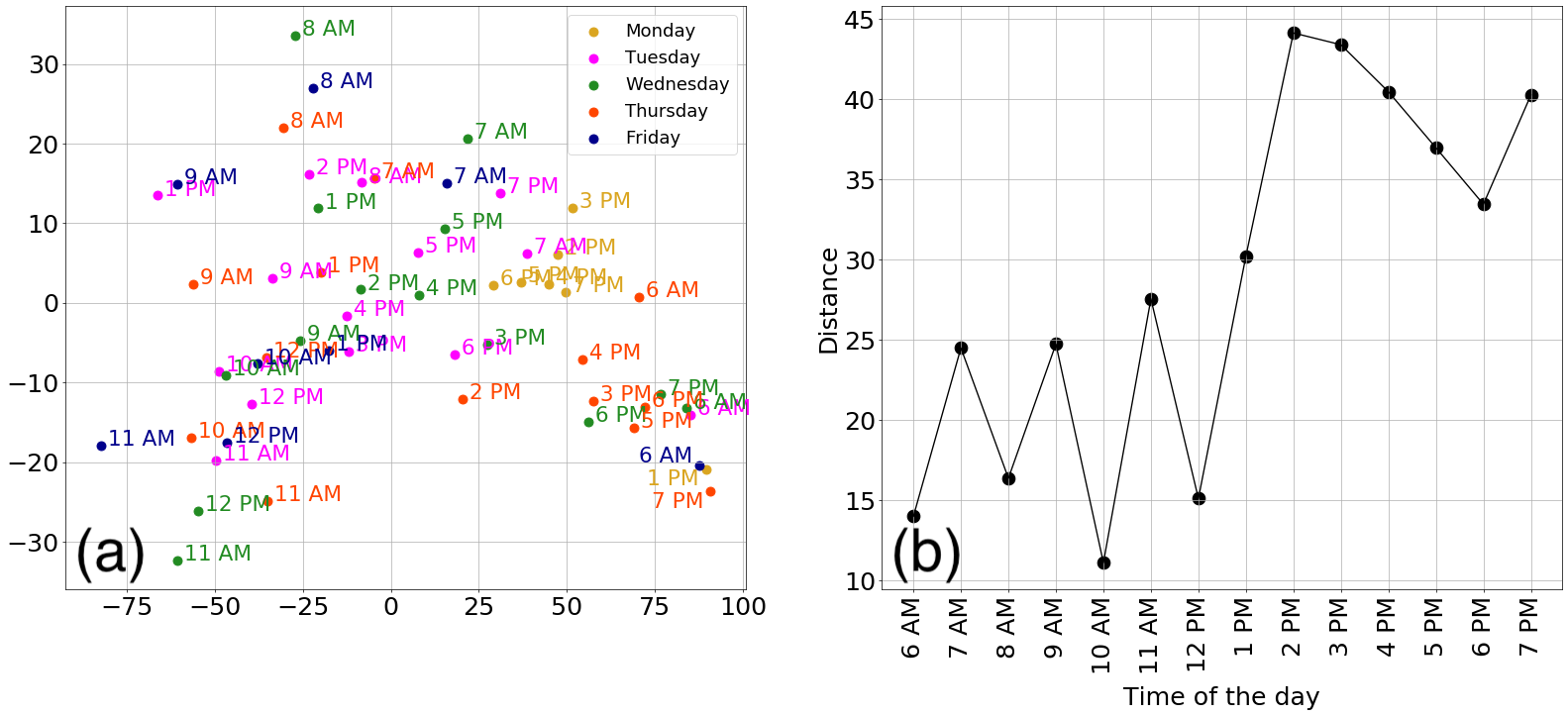}
\caption{Comparison of the trajectory of the Hospital temporal network across the five days, with $\alpha = 10^{-3}$. (a) Average coordinate of the tie-decay network in the embedding space over every hourly time window. (b) Day-to-day distance between the networks at the same time of the day. We omit tie-decay networks before 6 AM and after 8 PM.} 
\label{fig:Hosp_avg_001}
\end{figure}

\begin{figure}[t]
    \centering
    \includegraphics[width=\textwidth]{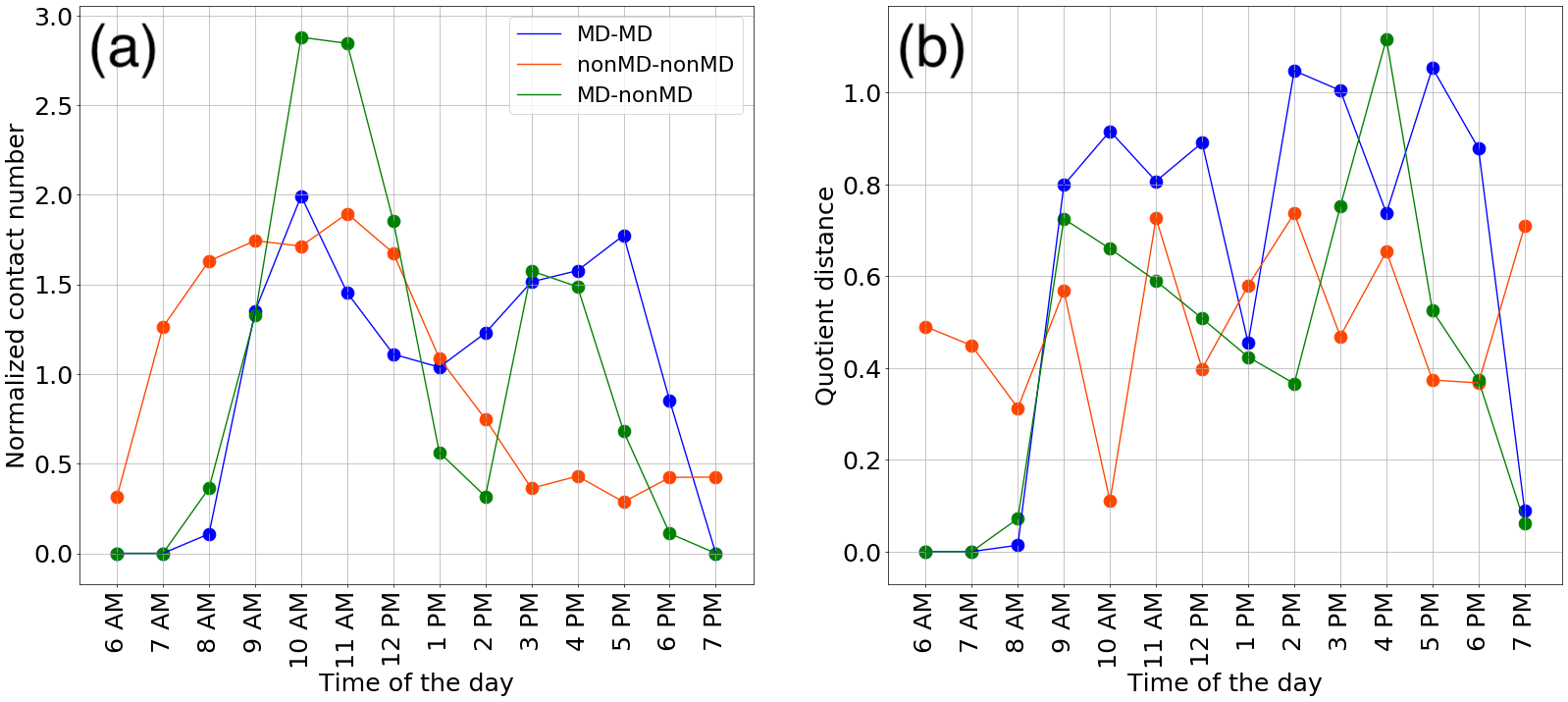}
\caption{Comparison among temporal Hospital subnetworks with $\alpha = 10^{-3}$. (a) Normalized number of contacts over time. (b) Normalized day-to-day distance at the same time of the day for the different subnetworks. In (a), we use the same normalization as that used in Fig.~\ref{fig:Hosp_avg_comparison}(a).}
\label{fig:Hosp_avg_comparison_001}
\end{figure}
\end{appendices}

\begin{appendices}
\section{Results for the graph classification with the three-dimensional embedding space}
\label{appendix_3d_classification}

We compare in Table~\ref{table:classification_accuracy_3dim} the accuracy of the different embedding methods in the graph classification task when the embedding dimension is three.
The results are similar to those with the two-dimensional embedding space, shown in Table~\ref{table:classification_accuracy}.

\begin{table}[t]
\begin{center}
\captionof{table}{Mean and standard deviation of accuracy of the different embedding methods in the graph classification task when the embedding dimension is three. We calculated the mean and standard deviation on the basis of $10^4$ iterations of the graph classification task.\label{table:classification_accuracy_3dim}}
\begin{tabular}{|c|c|c|}
 \hline
Method & Mean & Standard deviation \\
\hline
\multirow{1}{*}{Tie-decay with $\alpha = 10^{-2}$} & $0.88$ & 0.10 \\
\cline{1-3}
\multirow{1}{*}{Tie-decay with $\alpha = 10^{-3}$} & $0.77$ & 0.14 \\
\cline{1-3}
\multirow{1}{*}{PCA} & $0.84$ & 0.13 \\
\cline{1-3}
\multirow{1}{*}{t-SNE} & $0.67$ & 0.15 \\
\cline{1-3}
\multirow{1}{*}{MDS} & $0.86$ & 0.11 \\
\cline{1-3}
 \hline
\end{tabular}
\end{center}
\end{table}

\end{appendices}

\clearpage
\bibliographystyle{chronological}
\bibliography{bibliography.bib}

\begin{thebibliography}{72}
\providecommand{\natexlab}[1]{#1}
\expandafter\ifx\csname urlstyle\endcsname\relax
  \providecommand{\doi}[1]{doi:\discretionary{}{}{}#1}\else
  \providecommand{\doi}{doi:\discretionary{}{}{}\begingroup
  \urlstyle{rm}\Url}\fi

\bibitem[Kruskal, 1964]{kruskal1964multidimensional}
Kruskal, J.~B.
\newblock Multidimensional scaling by optimizing goodness of fit to a nonmetric
  hypothesis.
\newblock \emph{Psychometrika}, 29(1):1--27, 1964.

\bibitem[Graepel et~al., 1998]{graepel1999classification}
Graepel, T., Herbrich, R., Bollmann-Sdorra, P., and Obermayer, K.
\newblock Classification on pairwise proximity data.
\newblock In \emph{\text{Proc.} Advances in Neural Information Processing
  Systems}, volume~11, pages 438--444. MIT Press, 1998.

\bibitem[Pekalska et~al., 2001]{pekalska2001generalized}
Pekalska, E., Paclik, P., and Duin, R.~P.
\newblock A generalized kernel approach to dissimilarity-based classification.
\newblock \emph{Journal of Machine Learning Research}, 2:175--211, 2001.

\bibitem[{De Silva} and Tenenbaum, 2003]{Desilva2003Nips}
{De Silva}, V. and Tenenbaum, J.~B.
\newblock {Global versus local methods in nonlinear dimensionality reduction}.
\newblock In \emph{\text{Proc.} Advances in Neural Information Processing
  Systems}, pages 721--728. 2003.

\bibitem[Bengio et~al., 2004]{Bengio2004Nips}
Bengio, Y., Paiement, J.~F., Vincent, P., Delalleau, O., {Le Roux}, N., and
  Ouimet, M.
\newblock {Out-of-sample extensions for LLE, Isomap, MDS, Eigenmaps, and
  spectral clustering}.
\newblock In \emph{Proc. Advances in Neural Information Processing Systems},
  pages 177--184. 2004.

\bibitem[De~Silva and Tenenbaum, 2004]{de2004sparse}
De~Silva, V. and Tenenbaum, J.~B.
\newblock Sparse multidimensional scaling using landmark points.
\newblock Technical report, Technical report, Stanford University, 2004.

\bibitem[Banerjee et~al., 2005]{banerjee2005clustering}
Banerjee, A., Merugu, S., Dhillon, I.~S., and Ghosh, J.
\newblock Clustering with bregman divergences.
\newblock \emph{Journal of Machine Learning Research}, 6:1705--1749, 2005.

\bibitem[Borg and Groenen, 2005]{borg2005modern}
Borg, I. and Groenen, P.~J.
\newblock \emph{Modern Multidimensional Scaling: Theory and Applications}.
\newblock Springer Science \& Business Media, New York, NY, 2005.

\bibitem[Duin and Pekalska, 2005]{duin2005dissimilarity}
Duin, R.~P. and Pekalska, E.
\newblock \emph{Dissimilarity Representation for Pattern Recognition,
  Foundations and Applications}, volume~64.
\newblock World Scientific, 2005.

\bibitem[Platt, 2005]{Platt2005Aistats}
Platt, J.~C.
\newblock {FastMap, MetricMap, and Landmark MDS are all Nystr{\"{o}}m
  algorithms}.
\newblock \emph{Proceedings of the 10th International Workshop on Artificial
  Intelligence and Statistics}, pages 261--268, 2005.

\bibitem[Liben-Nowell and Kleinberg, 2007]{liben2007link}
Liben-Nowell, D. and Kleinberg, J.
\newblock The link-prediction problem for social networks.
\newblock \emph{Journal of the American Society for Information Science and
  Technology}, 58(7):1019--1031, 2007.

\bibitem[Palla et~al., 2007]{palla2007quantifying}
Palla, G., Barab{\'a}si, A.-L., and Vicsek, T.
\newblock Quantifying social group evolution.
\newblock \emph{Nature}, 446(7136):664--667, 2007.

\bibitem[Van~der Maaten and Hinton, 2008]{van2008visualizing}
Van~der Maaten, L. and Hinton, G.
\newblock Visualizing data using t-sne.
\newblock \emph{Journal of Machine Learning Research}, 9(11), 2008.

\bibitem[Talwalkar et~al., 2008]{Talwalkar2008IeeeConfComputVisPatRecog}
Talwalkar, A., Kumar, S., and Rowley, H.
\newblock {Large-scale manifold learning}.
\newblock In \emph{Proc. 2008 IEEE Conference on Computer Vision and Pattern
  Recognition}, pages 1--8. 2008.

\bibitem[Wilson and Zhu, 2008]{wilson2008study}
Wilson, R.~C. and Zhu, P.
\newblock A study of graph spectra for comparing graphs and trees.
\newblock \emph{Pattern Recognition}, 41(9):2833--2841, 2008.

\bibitem[Agovic et~al., 2009]{agovic2009anomaly}
Agovic, A., Banerjee, A., Ganguly, A., and Protopopescu, V.
\newblock Anomaly detection using manifold embedding and its applications in
  transportation corridors.
\newblock \emph{Intelligent Data Analysis}, 13(3):435--455, 2009.

\bibitem[Chandola et~al., 2009]{chandola2009anomaly}
Chandola, V., Banerjee, A., and Kumar, V.
\newblock Anomaly detection: A survey.
\newblock \emph{ACM Computing Surveys}, 41(3):1--58, 2009.

\bibitem[Van Der~Maaten et~al., 2009]{van2009dimensionality}
Van Der~Maaten, L., Postma, E., Van~den Herik, J., et~al.
\newblock Dimensionality reduction: a comparative.
\newblock \emph{J Mach Learn Res}, 10(66-71):13, 2009.

\bibitem[Bansal et~al., 2010]{bansal2010dynamic}
Bansal, S., Read, J., Pourbohloul, B., and Meyers, L.~A.
\newblock The dynamic nature of contact networks in infectious disease
  epidemiology.
\newblock \emph{Journal of Biological Dynamics}, 4(5):478--489, 2010.

\bibitem[Fortunato, 2010]{fortunato2010community}
Fortunato, S.
\newblock Community detection in graphs.
\newblock \emph{Physics Reports}, 486(3-5):75--174, 2010.

\bibitem[Mucha et~al., 2010]{mucha2010community}
Mucha, P.~J., Richardson, T., Macon, K., Porter, M.~A., and Onnela, J.-P.
\newblock Community structure in time-dependent, multiscale, and multiplex
  networks.
\newblock \emph{Science}, 328(5980):876--878, 2010.

\bibitem[Rosvall and Bergstrom, 2010]{rosvall2010mapping}
Rosvall, M. and Bergstrom, C.~T.
\newblock Mapping change in large networks.
\newblock \emph{PLoS ONE}, 5(1):e8694, 2010.

\bibitem[France and Carroll, 2011]{france2010two}
France, S.~L. and Carroll, J.~D.
\newblock Two-way multidimensional scaling: A review.
\newblock \emph{IEEE Transactions on Systems, Man, and Cybernetics, Part C},
  41(5):644--661, 2011.

\bibitem[Hasan and Zaki, 2011]{hasan2011survey}
Hasan, M.~A. and Zaki, M.~J.
\newblock A survey of link prediction in social networks.
\newblock In \emph{Social Network Data Analytics}, Aggarwal, C.C., pages
  243--275. Springer, Boston, MA, 2011.

\bibitem[L{\"u} and Zhou, 2011]{lu2011link}
L{\"u}, L. and Zhou, T.
\newblock Link prediction in complex networks: A survey.
\newblock \emph{Physica A}, 390(6):1150--1170, 2011.

\bibitem[Stehl{\'e} et~al., 2011]{stehle2011high}
Stehl{\'e}, J., Voirin, N., Barrat, A., Cattuto, C., Isella, L., Pinton, J.-F.,
  Quaggiotto, M., Van~den Broeck, W., R{\'e}gis, C., Lina, B., et~al.
\newblock High-resolution measurements of face-to-face contact patterns in a
  primary school.
\newblock \emph{PLoS ONE}, 6(8):e23176, 2011.

\bibitem[Holme and Saramäki, 2012]{HOLME201297}
Holme, P. and Saramäki, J.
\newblock Temporal networks.
\newblock \emph{Physics Reports}, 519(3):97--125, 2012.

\bibitem[Hutchison et~al., 2013]{hutchison2013dynamic}
Hutchison, R.~M., Womelsdorf, T., Allen, E.~A., Bandettini, P.~A., Calhoun,
  V.~D., Corbetta, M., Della~Penna, S., Duyn, J.~H., Glover, G.~H.,
  Gonzalez-Castillo, J., et~al.
\newblock Dynamic functional connectivity: promise, issues, and
  interpretations.
\newblock \emph{NeuroImage}, 80:360--378, 2013.

\bibitem[Vanhems et~al., 2013]{vanhems2013estimating}
Vanhems, P., Barrat, A., Cattuto, C., Pinton, J.-F., Khanafer, N., R{\'e}gis,
  C., Kim, B.-a., Comte, B., and Voirin, N.
\newblock Estimating potential infection transmission routes in hospital wards
  using wearable proximity sensors.
\newblock \emph{PLoS ONE}, 8(9):e73970, 2013.

\bibitem[Wicker et~al., 2013]{wicker2013new}
Wicker, N., Nguyen, C.~H., and Mamitsuka, H.
\newblock A new dissimilarity measure for comparing labeled graphs.
\newblock \emph{Linear Algebra and its Applications}, 438(5):2331--2338, 2013.

\bibitem[Allen et~al., 2014]{allen2012tracking}
Allen, E.~A., Damaraju, E., Plis, S.~M., Erhardt, E.~B., Eichele, T., and
  Calhoun, V.~D.
\newblock {Tracking whole-brain connectivity dynamics in the resting state}.
\newblock \emph{Cerebral Cortex}, 24(3):663--676, 2014.

\bibitem[Baker et~al., 2014]{baker2014fast}
Baker, A.~P., Brookes, M.~J., Rezek, I.~A., Smith, S.~M., Behrens, T., Smith,
  P. J.~P., and Woolrich, M.
\newblock Fast transient networks in spontaneous human brain activity.
\newblock \emph{eLife}, 3:e01867, 2014.

\bibitem[Calhoun et~al., 2014]{calhoun2014chronnectome}
Calhoun, V.~D., Miller, R., Pearlson, G., and Adal{\i}, T.
\newblock The chronnectome: time-varying connectivity networks as the next
  frontier in fmri data discovery.
\newblock \emph{Neuron}, 84(2):262--274, 2014.

\bibitem[Gemmetto et~al., 2014]{gemmetto2014mitigation}
Gemmetto, V., Barrat, A., and Cattuto, C.
\newblock Mitigation of infectious disease at school: targeted class closure vs
  school closure.
\newblock \emph{BMC Infectious Diseases}, 14(1):695, 2014.

\bibitem[Perozzi et~al., 2014]{perozzi2014deepwalk}
Perozzi, B., Al-Rfou, R., and Skiena, S.
\newblock Deepwalk: Online learning of social representations.
\newblock In \emph{\text{Proc.} the 20th ACM SIGKDD International Conference on
  Knowledge Discovery and Data Mining}, pages 701--710,. 2014.

\bibitem[Akoglu et~al., 2015]{akoglu2015graph}
Akoglu, L., Tong, H., and Koutra, D.
\newblock Graph based anomaly detection and description: a survey.
\newblock \emph{Data Mining and Knowledge Discovery}, 29(3):626--688, 2015.

\bibitem[Cao et~al., 2015]{cao2015grarep}
Cao, S., Lu, W., and Xu, Q.
\newblock Grarep: Learning graph representations with global structural
  information.
\newblock In \emph{\text{Proc.} the 24th ACM International on Conference on
  Information and Knowledge Management}, pages 891--900. 2015.

\bibitem[Gisbrecht and Schleif, 2015]{gisbrecht2015metric}
Gisbrecht, A. and Schleif, F.-M.
\newblock Metric and non-metric proximity transformations at linear costs.
\newblock \emph{Neurocomputing}, 167:643--657, 2015.

\bibitem[Holme, 2015]{holme2015modern}
Holme, P.
\newblock Modern temporal network theory: a colloquium.
\newblock \emph{European Physical Journal B}, 88(9):1--30, 2015.

\bibitem[Ranshous et~al., 2015]{ranshous2015anomaly}
Ranshous, S., Shen, S., Koutra, D., Harenberg, S., Faloutsos, C., and Samatova,
  N.~F.
\newblock Anomaly detection in dynamic networks: a survey.
\newblock \emph{Wiley Interdisciplinary Reviews: Computational Statistics},
  7(3):223--247, 2015.

\bibitem[Tang et~al., 2015]{tang2015line}
Tang, J., Qu, M., Wang, M., Zhang, M., Yan, J., and Mei, Q.
\newblock Line: Large-scale information network embedding.
\newblock In \emph{\text{Proc.} the 24th International Conference on World Wide
  Web}, pages 1067--1077. 2015.

\bibitem[Barab{\'a}si, 2016]{barabasi2016network}
Barab{\'a}si, A.
\newblock \emph{Network Science}.
\newblock Cambridge University Press, 2016.

\bibitem[Grover and Leskovec, 2016]{grover2016node2vec}
Grover, A. and Leskovec, J.
\newblock node2vec: Scalable feature learning for networks.
\newblock In \emph{\text{Proc.} the 22nd ACM SIGKDD International Conference on
  Knowledge Discovery and Data Mining}, pages 855--864. 2016.

\bibitem[Ryali et~al., 2016]{ryali2016temporal}
Ryali, S., Supekar, K., Chen, T., Kochalka, J., Cai, W., Nicholas, J.,
  Padmanabhan, A., and Menon, V.
\newblock Temporal dynamics and developmental maturation of salience, default
  and central-executive network interactions revealed by variational bayes
  hidden markov modeling.
\newblock \emph{PLoS Computational Biology}, 12(12):e1005138, 2016.

\bibitem[Trouillon et~al., 2016]{trouillon2016complex}
Trouillon, T., Welbl, J., Riedel, S., Gaussier, {\'E}., and Bouchard, G.
\newblock Complex embeddings for simple link prediction.
\newblock In \emph{International Conference on Machine Learning}, pages
  2071--2080. PMLR, 2016.

\bibitem[Cavallari et~al., 2017]{cavallari2017learning}
Cavallari, S., Zheng, V.~W., Cai, H., Chang, K. C.-C., and Cambria, E.
\newblock Learning community embedding with community detection and node
  embedding on graphs.
\newblock In \emph{Proceedings of the 2017 ACM on Conference on Information and
  Knowledge Management}, pages 377--386. 2017.

\bibitem[Li et~al., 2017]{li2017attributed}
Li, J., Dani, H., Hu, X., Tang, J., Chang, Y., and Liu, H.
\newblock Attributed network embedding for learning in a dynamic environment.
\newblock In \emph{Proceedings of the 2017 ACM on Conference on Information and
  Knowledge Management}, pages 387--396. 2017.

\bibitem[Chennuru~Vankadara and von Luxburg, 2018]{chennuru2018measures}
Chennuru~Vankadara, L. and von Luxburg, U.
\newblock Measures of distortion for machine learning.
\newblock In \emph{\text{Proc.} Advances in Neural Information Processing
  Systems}, volume~31. 2018.

\bibitem[Donnat and Holmes, 2018]{donnat2018tracking}
Donnat, C. and Holmes, S.
\newblock Tracking network dynamics: A survey using graph distances.
\newblock \emph{Annals of Applied Statistics}, 12(2):971--1012, 2018.

\bibitem[Goyal and Ferrara, 2018]{goyal2018graph}
Goyal, P. and Ferrara, E.
\newblock Graph embedding techniques, applications, and performance: A survey.
\newblock \emph{Knowledge-Based Systems}, 151:78--94, 2018.

\bibitem[Goyal et~al., 2018]{goyal2018dyngem}
Goyal, P., Kamra, N., He, X., and Liu, Y.
\newblock Dyngem: Deep embedding method for dynamic graphs.
\newblock \emph{Preprint arXiv:1805.11273}, 2018.

\bibitem[Li et~al., 2018]{li2018deep}
Li, T., Zhang, J., Philip, S.~Y., Zhang, Y., and Yan, Y.
\newblock Deep dynamic network embedding for link prediction.
\newblock \emph{IEEE Access}, 6:29219--29230, 2018.

\bibitem[Newman, 2018]{newman2018networks}
Newman, M.
\newblock \emph{Networks}.
\newblock Oxford University Press, Oxford, UK, 2018.

\bibitem[Nguyen et~al., 2018]{nguyen2018continuous}
Nguyen, G.~H., Lee, J.~B., Rossi, R.~A., Ahmed, N.~K., Koh, E., and Kim, S.
\newblock Continuous-time dynamic network embeddings.
\newblock In \emph{Companion Proceedings of the the Web Conference 2018}, pages
  969--976. 2018.

\bibitem[Peixoto and Gauvin, 2018]{peixoto2018change}
Peixoto, T.~P. and Gauvin, L.
\newblock Change points, memory and epidemic spreading in temporal networks.
\newblock \emph{Scientific Reports}, 8:15511, 2018.

\bibitem[Saha et~al., 2018]{saha2018models}
Saha, T.~K., Williams, T., Hasan, M.~A., Joty, S., and Varberg, N.~K.
\newblock Models for capturing temporal smoothness in evolving networks for
  learning latent representation of nodes.
\newblock \emph{Preprint arXiv:1804.05816}, 2018.

\bibitem[Vidaurre et~al., 2018]{VIDAURRE2018646}
Vidaurre, D., Abeysuriya, R., Becker, R., Quinn, A.~J., Alfaro-Almagro, F.,
  Smith, S.~M., and Woolrich, M.~W.
\newblock Discovering dynamic brain networks from big data in rest and task.
\newblock \emph{NeuroImage}, 180:646--656, 2018.
\newblock ISSN 1053-8119.
\newblock Brain Connectivity Dynamics.

\bibitem[{Zambon} et~al., 2018]{zambon2018concept}
{Zambon}, D., {Alippi}, C., and {Livi}, L.
\newblock {Concept Drift and Anomaly Detection in Graph Streams}.
\newblock \emph{IEEE Transactions on Neural Networks and Learning Systems},
  29(11):5592--5605, 2018.
\newblock \doi{10.1109/TNNLS.2018.2804443}.

\bibitem[Long and Ferguson, 2019]{Long2019ApplComputHarmAnal}
Long, A.~W. and Ferguson, A.~L.
\newblock {Landmark diffusion maps (L-dMaps): Accelerated manifold learning
  out-of-sample extension}.
\newblock \emph{Applied and Computational Harmonic Analysis}, 47:190--211,
  2019.

\bibitem[Masuda and Holme, 2019]{masuda2019detecting}
Masuda, N. and Holme, P.
\newblock Detecting sequences of system states in temporal networks.
\newblock \emph{Scientific Reports}, 9(1):1--11, 2019.

\bibitem[Taheri et~al., 2019]{taheri2019learning}
Taheri, A., Gimpel, K., and Berger-Wolf, T.
\newblock Learning to represent the evolution of dynamic graphs with recurrent
  models.
\newblock In \emph{Companion \text{Proc.} the 2019 world wide web conference},
  pages 301--307. 2019.

\bibitem[Graffelman, 2020]{graffelman2020goodness}
Graffelman, J.
\newblock Goodness-of-fit filtering in classical metric multidimensional
  scaling with large datasets.
\newblock \emph{Journal of Applied Statistics}, 47(11):2011--2024, 2020.

\bibitem[Hartle et~al., 2020]{hartle2020network}
Hartle, H., Klein, B., McCabe, S., Daniels, A., St-Onge, G., Murphy, C., and
  Hébert-Dufresne, L.
\newblock Network comparison and the within-ensemble graph distance.
\newblock In \emph{\text{Proc.} the Royal Society A: Mathematical, Physical and
  Engineering Sciences}, volume 476, page 20190744. 2020.

\bibitem[Kazemi et~al., 2020]{kazemi2020representation}
Kazemi, S.~M., Goel, R., Jain, K., Kobyzev, I., Sethi, A., Forsyth, P., and
  Poupart, P.
\newblock Representation learning for dynamic graphs: A survey.
\newblock \emph{Journal of Machine Learning Research}, 21(70):1--73, 2020.

\bibitem[Masuda and Lambiotte, 2020]{masuda2020guide}
Masuda, N. and Lambiotte, R.
\newblock \emph{A Guide to Temporal Networks}.
\newblock WORLD SCIENTIFIC (SINGAPORE), 2020.

\bibitem[Wang et~al., 2020]{wang2020edge2vec}
Wang, C., Wang, C., Wang, Z., Ye, X., and Yu, P.~S.
\newblock Edge2vec: Edge-based social network embedding.
\newblock \emph{ACM Transactions on Knowledge Discovery from Data},
  14(4):1--24, 2020.

\bibitem[Wills and Meyer, 2020]{wills2020metrics}
Wills, P. and Meyer, F.~G.
\newblock Metrics for graph comparison: a practitioner’s guide.
\newblock \emph{PLoS ONE}, 15(2):e0228728, 2020.

\bibitem[Zhang et~al., 2020]{zhang2018network}
Zhang, D., Yin, J., Zhu, X., and Zhang, C.
\newblock Network representation learning: A survey.
\newblock \emph{IEEE Transactions on Big Data}, 6(1):3--28, 2020.

\bibitem[Ahmad et~al., 2021]{ahmad2021tie}
Ahmad, W., Porter, M.~A., and Beguerisse-Díaz, M.
\newblock Tie-decay networks in continuous time and eigenvector-based
  centralities.
\newblock \emph{IEEE Transactions on Network Science and Engineering},
  8(2):1759--1771, 2021.
\newblock \doi{10.1109/TNSE.2021.3071429}.

\bibitem[Lopes et~al., 2021]{lopes2021recurrence}
Lopes, M.~A., Zhang, J., Krzemi{\'n}ski, D., Hamandi, K., Chen, Q., Livi, L.,
  and Masuda, N.
\newblock Recurrence quantification analysis of dynamic brain networks.
\newblock \emph{European Journal of Neuroscience}, 53(4):1040--1059, 2021.

\bibitem[Lucas et~al., 2021]{lucas2021inferring}
Lucas, M., Morris, A., Townsend-Teague, A., Tichit, L., Habermann, B., and
  Barrat, A.
\newblock Inferring cell cycle phases from a partially temporal network of
  protein interactions.
\newblock \emph{Available at SSRN 3900712}, 2021.

\bibitem[Sugishita and Masuda, 2021]{sugishita2021recurrence}
Sugishita, K. and Masuda, N.
\newblock Recurrence in the evolution of air transport networks.
\newblock \emph{Scientific Reports}, 11(1):1--15, 2021.

\end{thebibliography}

\end{document}